\journal{arXiv}
\DeclareMathOperator*{\mnm}{min}
\DeclareMathOperator*{\mxm}{max}
\newcommand{\eps}{\varepsilon}
\DeclareMathOperator*{\supp}{supp}
\mathchardef\hyp="2D
\DeclareFontFamily{OMX}{MnSymbolE}{}
\DeclareSymbolFont{MnLargeSymbols}{OMX}{MnSymbolE}{m}{n}
\DeclareFontShape{OMX}{MnSymbolE}{m}{n}{
    <-6>  MnSymbolE5
   <6-7>  MnSymbolE6
   <7-8>  MnSymbolE7
   <8-9>  MnSymbolE8
   <9-10> MnSymbolE9
  <10-12> MnSymbolE10
  <12->   MnSymbolE12
}{}
\DeclareFontShape{OMX}{MnSymbolE}{b}{n}{
    <-6>  MnSymbolE-Bold5
   <6-7>  MnSymbolE-Bold6
   <7-8>  MnSymbolE-Bold7
   <8-9>  MnSymbolE-Bold8
   <9-10> MnSymbolE-Bold9
  <10-12> MnSymbolE-Bold10
  <12->   MnSymbolE-Bold12
}{}
\let\llangle\@undefined
\let\rrangle\@undefined
\DeclareMathDelimiter{\llangle}{\mathopen}%
                     {MnLargeSymbols}{'164}{MnLargeSymbols}{'164}
\DeclareMathDelimiter{\rrangle}{\mathclose}%
                     {MnLargeSymbols}{'171}{MnLargeSymbols}{'171}
\newtheorem{theorem}{Theorem}[section]
\newtheorem{lemma}[theorem]{Lemma}
\newtheorem{proposition}[theorem]{Proposition}
\newdefinition{definition}[theorem]{Definition}
\newdefinition{notation}[theorem]{Notation}
\newdefinition{remark}[theorem]{Remark}
\newdefinition{example}[theorem]{Example}
\begin{document}
\begin{frontmatter}

\title{Weighted Automata and Logics Meet Computational Complexity}

\author{Peter Kostol\'anyi}
\ead{kostolanyi@fmph.uniba.sk}
\address{Department of Computer Science, Comenius University in Bratislava, \\
Mlynsk\'a dolina, 842 48 Bratislava, Slovakia}

\begin{abstract}
Complexity classes such as $\#\mathbf{P}$, $\oplus\mathbf{P}$, $\mathbf{GapP}$, $\mathbf{OptP}$, $\mathbf{NPMV}$, or the class of fuzzy languages realised 
by~polynomial-time fuzzy nondeterministic Turing machines, can all be described in~terms of~a~class~$\mathbf{NP}[S]$ for~a~suitable semiring $S$, defined via weighted Turing machines over $S$
similarly as $\mathbf{NP}$ is defined via the~classical nondeterministic Turing machines. Other complexity classes of decision problems can be lifted to the quantitative world
using the same recipe as well, and the resulting classes relate to the original ones in the same way as weighted automata or logics relate to their unweighted counterparts.
The article surveys these too-little-known connexions between weighted automata theory and computational complexity theory implicit in the existing literature,    
suggests a systematic approach to the study of weighted complexity classes, and presents several new observations strengthening the relation between both fields. 
In particular, it is proved that a natural extension of the Boolean satisfiability problem to weighted propositional logic is complete for the class $\mathbf{NP}[S]$
when $S$ is a finitely generated semiring. Moreover, a class of semiring-valued functions~$\mathbf{FP}[S]$ is introduced for each semiring $S$ as a~counterpart to the class $\mathbf{P}$,
and the relations between $\mathbf{FP}[S]$ and $\mathbf{NP}[S]$ are considered. 
\end{abstract}

\begin{keyword}
Weighted complexity class \sep Weighted Turing machine \sep Semiring \sep Completeness \sep Weighted logic    
\end{keyword}
\end{frontmatter}

\section{Introduction}

\emph{Weighted automata theory} \cite{berstel2011a,droste2009a,droste2021a,kuich1997a,kuich1986a,sakarovitch2009a,salomaa1978a}, origins of which go back to the seminal article of~M.-P.~Sch\"utzenberger~\cite{schutzenberger1961a}, encompasses 
the study of \emph{quantitative generalisations} of automata and related formalisms such as grammars \cite{droste2014a,inoue2023a,stanat1972a} and systems of equations \cite{kuich1986a,petre2009a}, rational expressions \cite{sakarovitch2009a,sakarovitch2021a}, and MSO logics \cite{droste2007a,droste2009c}.
A weight taken from some structure -- most typically a \emph{semiring} -- is assigned to each transition of a weighted automaton, so that these automata no longer describe formal languages;
instead, they realise functions mapping words over some alphabet to elements of the underlying semiring. After identifying the most natural algebra for such functions, one usually interprets
them as \emph{formal power series} in several noncommutative variables \cite{berstel2011a,droste2009b,droste2021a,sakarovitch2009a,sakarovitch2009b,salomaa1978a}, also called \emph{weighted languages} \cite{herrmann2019a,maletti2017a}.  
  
Among the objects of study of weighted automata theory, weighted finite automata and the corresponding class of rational series are undoubtedly the best understood~\cite{sakarovitch2009b}.
Nevertheless, a great deal of research has~also focused on weighted generalisations of models that are not expressive equivalents of finite automata: for~instance, there are now well-developed theories of weighted counterparts to context-free languages \cite{kuich1986a,petre2009a},
one-counter languages \cite{kuich1983a}, or Lindenmayer systems \cite{honkala2009a}.  
  
On the other hand, \emph{weighted Turing machines} received comparably little attention. These were introduced as ``algebraic Turing machines'' 
by C.~Damm, M.~Holzer, and~P.~McKenzie~\cite{damm2002a}, who also discovered many of the applications of such machines to computational complexity surveyed in this article.
However, the authors used this model mainly as an auxiliary tool for studying complexity aspects of~evaluating tensor formulae over various semirings, and they did not make a~natural connexion to weighted automata theory explicit.
Their work thus remained essentially unnoticed by the weighted automata community.\goodbreak      

To the author's best knowledge, an explicit study of weighted Turing machines over structures incorporating at least all semirings only appeared in~\cite{li2015},
where several variants of such machines over strong bimonoids\footnote{The theory of weighted \emph{finite automata} over strong bimonoids is a well-understood generalisation of their usual theory over semirings \cite{ciric2010a,droste2010a}; strong bimonoids are essentially semirings without distributivity.} were considered.   
There also is a mention of weighted Turing machines in~\cite{burgin2008a}; however, these are understood there on~a~more-or-less informal level
and in a much broader sense. Finally, a class of so-called ``semiring Turing machines'' similar to weighted Turing machines was recently introduced in~\cite{eiter2021a,eiter2023a}.      
\smallskip

This article can be seen as an attempt to convey the significance of \emph{weighted Turing machines}, defined by analogy to weighted automata, for 
a substantial part of computational complexity theory.

As decision problems are captured by languages, the usual complexity classes of decision problems -- such as $\mathbf{P}$, $\mathbf{NP}$, $\mathbf{PSPACE}$, or $\mathbf{EXPTIME}$ -- are
actually classes of languages. On the other hand, counting or~optimisation problems no longer correspond to languages, but to their quantitative extensions.
Applying an approach standard in weighted automata theory, these problems can be regarded as \emph{formal power series}. The~classes such as $\#\mathbf{P}$ or $\mathbf{OptP}$
can thus be seen as weighted -- or quantitative -- complexity classes of~noncommutative formal power series.

The key observation to be highlighted in this article is that many of such quantitative complexity classes, commonly studied in computational complexity theory, can in fact be naturally characterised by weighted Turing machines
over a suitable semiring. For instance, one may consider a complexity class $\mathbf{NP}[S]$ consisting of all power series realised by~polynomial-time
weighted Turing machines over a semiring $S$. As usual, setting~$S$ to the Boolean semiring corresponds to disregarding weights -- that is, $\mathbf{NP}[S]$ becomes just $\mathbf{NP}$ for $S = \mathbb{B}$.
On the other hand, the semiring of natural numbers~$\mathbb{N}$ typically takes the role of counting;
it is thus not surprising that the class $\mathbf{NP}[\mathbb{N}]$ coincides with the counting complexity class $\#\mathbf{P}$ \cite{valiant1979a,valiant1979b,arora2009a,goldreich2008a,papadimitriou1994a}.
This example, which can be seen as a leading source of inspiration for studying the classes $\mathbf{NP}[S]$,
was actually already observed -- and presented in their terminology of algebraic Turing machines -- by C.~Damm, M.~Holzer, and~P.~McKenzie~\cite{damm2002a}.

Moreover, several other examples of semirings $S$, for which $\mathbf{NP}[S]$ corresponds to a known complexity class, were identified in~\cite{damm2002a} as well:
over the finite field $\mathbb{F}_2 = \mathbb{Z}/2\mathbb{Z}$, the ``parity-$\mathbf{P}$'' class $\oplus\mathbf{P}$ \cite{papadimitriou1983a} is represented by $\mathbf{NP}[\mathbb{F}_2]$;
similarly, the class of~supports of~series from~$\mathbf{NP}[\mathbb{Z}/k\mathbb{Z}]$, for a natural number $k \geq 2$, corresponds to~$\mathbf{MOD_{k}P}$~\cite{beigel1992a,cai1990a},
and~the~class $\mathbf{NP}[\mathbb{Z}]$ can be seen as $\mathbf{GapP}$~\cite{gupta1991a,gupta1995a,fenner1994a}.     

This list is extended in this article by a few less straightforward examples: we observe that the class of~optimisation problems $\mathbf{OptP}[O(\log n)]$ \cite{krentel1988a}
can be captured by $\mathbf{NP}[\mathbb{N}_{\mxm}]$, where $\mathbb{N}_{\mxm}$ is the max-plus semiring of natural numbers; moreover, the class $\mathbf{OptP}$ \cite{krentel1988a}
can also be modelled as $\mathbf{NP}[S_{\mxm}]$ for $S_{\mxm}$ being a suitable ``max-plus semiring of binary words''.
Furthermore, over semirings of finite languages \smash{$2^{\Sigma^*}_{\mathit{fin}}$}, the classes \smash{$\mathbf{NP}[2^{\Sigma^*}_{\mathit{fin}}]$} relate to the complexity class of multivalued functions $\mathbf{NPMV}$ \cite{book1985a},
and the class of~all fuzzy languages realisable by polynomial-time fuzzy~Turing machines, understood in the sense of~\cite{wiedermann2004a} (see~also~\cite{bedregal2008a} for more on this model and \cite{zadeh1968a} for the~first mention of fuzzy algorithms),
can also be viewed as~$\mathbf{NP}[F_{*}]$ for a suitable semiring $F_{*}$ depending on the triangular norm $*$ considered.               

The existence of a problem complete for $\mathbf{NP}[S]$ when~$S$ is finitely generated as an~additive monoid was also essentially established in~\cite{damm2002a} and~the~subsequent article by M.~Beaudry and M.~Holzer~\cite{beaudry2007a} --
it~was~proved there that the evaluation problem for~scalar tensor formulae over~$S$ has this property.
In this article, we gain some more insight into the concept of $\mathbf{NP}[S]$-completeness by showing that in~fact already the~Cook-Levin theorem~\cite{sipser2013a} generalises to the weighted setting:
a suitably defined \emph{``satisfiability'' problem for weighted propositional logics}, $\mathsf{SAT}[S]$, is $\mathbf{NP}[S]$-complete with respect to polynomial-time many-one reductions whenever the semiring $S$ is finitely generated (in~the~usual algebraic sense, \emph{i.e.}, as~a~semiring).
Here, the~weighted propositional logics are precisely the~propositional fragments of~the~weighted MSO logics of~M.~Droste and~P.~Gastin~\cite{droste2007a,droste2009c}.
In~addition, we identify one more artificial \mbox{$\mathbf{NP}[S]$-complete} problem for~every finitely generated semiring $S$, inspired by~the~$\mathsf{TMSAT}$ problem of~S.~Arora and~B.~Barak~\cite{arora2009a};
this~problem involving weighted Turing machines can be proved to be $\mathbf{NP}[S]$-complete quite readily, but otherwise is of~limited use.
Finally, we observe that $\mathbf{NP}[S]$-complete problems with respect to polynomial-time many-one reductions do not exist when $S$ is not finitely generated.\goodbreak

In addition to the class $\mathbf{NP}[S]$, we also introduce its ``deterministic counterpart'' $\mathbf{FP}[S]$; this class of~semiring-valued functions generalises both
$\mathbf{P}$ and the class $\mathbf{FP}$, which relates to $\#\mathbf{P}$ similarly as~$\mathbf{P}$ does to~$\mathbf{NP}$~\cite{arora2009a}.
Quite naturally, $\mathbf{FP}[S] \subseteq \mathbf{NP}[S]$ holds for all semirings $S$; we also observe that there is a~semiring, for which this inclusion is provably strict.      
We finally gather some basic observations concerning the role of the semiring $S$ when it comes to the relation of the classes $\mathbf{FP}[S]$ and $\mathbf{NP}[S]$:
we prove that when $T$ is a factor semiring of $S$, then $\mathbf{FP}[S] = \mathbf{NP}[S]$ implies $\mathbf{FP}[T] = \mathbf{NP}[T]$.

The original among the results presented in this survey are based on~an~unpublished work of~the author from~2019. Meanwhile, some similar results were also obtained
by~T.~Eiter and~R.~Kiesel~\cite{eiter2021a,eiter2023a} in~their related framework of semiring Turing machines. 
\smallskip    

The author believes that a consistent study of weighted complexity classes and related concepts, suggested by this article, can~be worthwhile for at least the following three reasons:
\begin{enumerate}
\item{Traditionally, weighted automata theory has mostly dealt with what S.~Eilenberg~\cite{eilenberg1974a} aptly described as~rational and~algebraic phenomena -- or with their further specialisations.
Solid generalisations to~the~weighted setting have thus been obtained over the years for the classical theories of rational and context-free languages, corresponding to the two lowermost levels of the Chomsky hierarchy.
The~study of~weighted computational complexity might help to understand how the principal concepts of weighted automata theory relate to classes of languages beyond these two levels, usually studied within the theory of computation. 
It should become clearer that virtually all of the main concepts covered in a~typical first course on~formal languages and~automata theory such as~\cite{hopcroft1979a} -- including the~theory of~computation -- can not only be consistently generalised to the~weighted setting, but these
generalisations are actually meaningful and~often related to~important concepts from other areas.     
}
\item{Plenty of complexity classes have been introduced over the years, and many of them are quantitative in~their essence -- \emph{i.e.}, instead of~decision problems,
they classify problems in~which a~value from some algebra is assigned to each input \cite{arora2009a,papadimitriou1994a}. 
It turns out that these classes can often be characterised via weighted Turing machines, which makes the analogies with the corresponding complexity classes of~decision problems 
more transparent, while the nature of such analogies can be described by~the~semiring considered. The landscape of quantitative complexity classes thus in a sense becomes more readable.              
}
\item{Universal importance of weighted logics becomes apparent in the study of weighted complexity classes.
Weighted MSO logics over words were introduced by~M.~Droste and~P.~Gastin~\cite{droste2007a} in order to characterise the~class of~rational series in~a~similar spirit as~rational languages
are captured by~unweighted MSO logics over words~\cite{buchi1960a}; this result was also extended from finite words to other settings such as infinite words~\cite{droste2007b}, trees~\cite{droste2006a,droste2011b}, or~graphs~\cite{droste2015a}.
Later, weighted first-order logics and their relation to aperiodic weighted automata were also considered \cite{droste2019a}. 
In connexion to weighted complexity classes, weighted logics tend to arise in new unexpected ways. As we observe,
the Boolean satisfiability problem~$\mathsf{SAT}$ admits a natural weighted generalisation $\mathsf{SAT}[S]$ over a~semiring~$S$, which can be seen as the~``satisfiability'' problem for~the~\emph{weighted propositional logic} over $S$,
the propositional fragment of~the~weighted MSO logic over $S$. This problem turns out to be $\mathbf{NP}[S]$-complete whenever $S$ is finitely generated, which generalises the Cook-Levin theorem to the weighted setting.
In yet another direction, there have been attempts \cite{arenas2017a,arenas2020a,durand2021a,saluja1995a} to extend the~results of~\emph{descriptive complexity}~\cite{immerman1999a,gradel2007a} to~counting complexity classes.
A~relatively recent approach of M.~Arenas, M.~Mu\~noz, and~C.~Riveros~\cite{arenas2017a,arenas2020a} successfully uses weighted logics over the~semiring of~natural numbers for~this purpose.
The theory of~weighted complexity classes over abstract semirings opens a~door to~possible extensions of~these results, in~which quantitative descriptive complexity theory based on~weighted logics          
would be developed over some fairly general class of semirings. 

\emph{Note added in revision}: After the~first version of~this preprint was made available, the~study of~descriptive complexity in~the~weighted setting was initiated by~G.~Badia, M.~Droste, C.~Noguera, and~E.~Paul~\cite{badia2024a}.          
}
\end{enumerate}

\section{Preliminaries}

We denote by $\mathbb{N}$ the set of all natural numbers \emph{including} zero. Given a~set $X$, we write $2^X$ for~the~powerset of~$X$ and~$2_{\mathit{fin}}^X$ for the set of all finite subsets of $X$.
When not stated otherwise, alphabets are understood to~be finite and nonempty. The reversal of a word $w \in \Sigma^*$ over an alphabet $\Sigma$ is denoted by $w^R$.

A \emph{monoid} is a triple $M = (M,\cdot,1)$, where $M$ is a set, $\cdot$ is an associative binary operation on~$M$, and $1$ is a neutral element of $M$ with respect to this operation.
A monoid $(M,\cdot,1)$ is said to be \emph{commutative} if~$\cdot$~is. A~\emph{semiring} is~an~algebra $S = (S,+,\cdot,0,1)$ such that $(S,+,0)$ is a commutative monoid,
$(S,\cdot,1)$ is a~monoid, the multiplicative operation $\cdot$ distributes over $+$ both from left and~from right, and~$0 \cdot a = a \cdot 0 = 0$ holds for~all $a \in S$. 
See \cite{droste2009b,golan1999a,hebisch1998a} for a~reference on~semirings.  

A \emph{subsemiring} of a semiring $S$ is a semiring $T$ such that $T \subseteq S$, the operations of $T$ are those of~$S$ restricted to $T$, and the neutral elements of $T$ are the same as for $S$.
The \emph{subsemiring of $S$ generated by~a~set~$G \subseteq S$} is the smallest subsemiring $\langle G\rangle$ of~$S$ containing $G$ -- this is the same as the smallest superset of~$G \cup \{0,1\}$ contained in~$S$ and~closed under both operations of~$S$,
or the~intersection of all subsemirings of~$S$ containing~$G$.
A semiring $S$ is \emph{finitely generated} if it is generated by some finite $G \subseteq S$.
\smallskip  

Let $S$ be a semiring and $\Sigma$ an alphabet. A \emph{formal power series} over $S$ and $\Sigma$ \cite{berstel2011a,droste2009b,droste2021a,kuich1997a,sakarovitch2009a,sakarovitch2009b,salomaa1978a} is~a~mapping $r\colon\Sigma^* \to S$.
One usually writes $(r,w)$ for the value of $r$ on $w \in \Sigma^*$, and calls $(r,w)$ the~\emph{coefficient} of~$w$~in 
\begin{displaymath}
r = \sum_{w \in \Sigma^*} (r,w)\,w.
\end{displaymath}
The set of all series over $S$ and $\Sigma$ is denoted by $S\llangle\Sigma^*\rrangle$.

The \emph{support} of a~series $r \in S\llangle\Sigma^*\rrangle$ is the language $\supp(r)$ of all $w \in \Sigma^*$ such that $(r,w) \neq 0$.
A~series with finite support is called a \emph{polynomial} and the set of all polynomials over $S$ and $\Sigma$ is denoted by $S\langle\Sigma^*\rangle$.

The \emph{sum} of series $r_1,r_2 \in S\llangle\Sigma^*\rrangle$ is a series $r_1 + r_2$ defined for all $w \in \Sigma^*$ by
$(r_1 + r_2,w) = (r_1,w) + (r_2,w)$, and the \emph{Cauchy product} of $r_1,r_2 \in S\llangle\Sigma^*\rrangle$ is a series $r_1 \cdot r_2$ such that 
\begin{displaymath}
(r_1 \cdot r_2,w) = \sum_{\substack{u,v \in \Sigma^* \\ uv = w}} (r_1,u)(r_2,v)
\end{displaymath}
for all $w \in \Sigma^*$. This choice of a multiplicative operation is actually the reason behind adopting the terminology and notation of formal power series
for the mappings $r\colon\Sigma^* \to S$.  

A formal power series $r$ such that $(r,\eps) = a$ for some $a \in S$ and $(r,w) = 0$ for all $w \in \Sigma^+$ is identified with~$a$; similarly,
a series $r$ such that $(r,w) = 1$ for some $w \in \Sigma^*$ and $(r,x) = 0$ for all $x \in \Sigma^* \setminus \{w\}$ is~identified with~$w$. 

Given a semiring $S$ and alphabet~$\Sigma$, the algebras $(S\llangle\Sigma^*\rrangle,+,\cdot,0,1)$ and $(S\langle\Sigma^*\rangle,+,\cdot,0,1)$ are semi\-rings as~well~\cite{droste2009b}.
The~semirings of power series $S\llangle\Sigma^*\rrangle$ are common generalisations of both the usual semirings of~univariate formal power series and~the~semirings of~formal languages.
The~former are obtained when the~alphabet~$\Sigma$ is unary. The~latter arise when $S$ is the Boolean semiring $\mathbb{B} = (\mathbb{B},\lor,\land,0,1)$:
a~series $r$ over $\mathbb{B}$ can be identified with the~language $\supp(r)$; conversely, every language can be turned into a~series over~the~semiring~$\mathbb{B}$ by~taking its characteristic function.

A family of series $(r_i \mid i \in I)$ from $S\llangle\Sigma^*\rrangle$ is \emph{locally finite} if the~set $I(w) := \{i \in I \mid (r_i,w) \neq 0\}$ is finite for all $w \in \Sigma^*$. One then writes
$\sum_{i \in I} r_i = r$ for a series $r \in S\llangle\Sigma^*\rrangle$ defined by $(r,w) = \sum_{i \in I(w)}(r_i,w)$ for~all~$w \in \Sigma^*$.
\smallskip 

By an \emph{algebra of terms} $T(G)$ over a generator set $G$, we understand the algebra $(T(G),+,\cdot,0,1)$ of~terms of type $(2,2,0,0)$; see, e.g.,~\cite{almeida1994a}.
In other words, $T(G)$ is a~language over $G \cup \{0,1,+,\cdot,(,)\}$ such that each $g \in G \cup \{0,1\}$ is in $T(G)$, the~terms
$(t_1 + t_2)$ and $(t_1 \cdot t_2)$ are in $T(G)$ for all $t_1,t_2 \in T(G)$, and nothing else is in $T(G)$. The~operations $+$ and~$\cdot$ are defined by $+\colon (t_1,t_2) \mapsto (t_1 + t_2)$ and $\cdot\colon (t_1,t_2) \mapsto (t_1 \cdot t_2)$ for all $t_1,t_2 \in T(G)$. 
For every semiring $S$ generated by $G$, there is a~unique algebra homomorphism $h_G[S]\colon T(G) \to S$ such that $h_G[S](g) = g$ for all $g \in G \cup \{0,1\}$,
which we call the~\emph{evaluation homomorphism}; a term $t \in T(G)$ \emph{evaluates} to $a$ in $S$ if $h_G[S](t) = a$.
We often omit parentheses in~terms that have no effect on evaluation in~semirings -- e.g., we write $a+b+c$ instead of $(a + (b + c))$.     

Finally, recall that similarly as in any other variety of algebras \cite{almeida1994a,bergman2011a}, a semiring~$T$ is (isomorphic~to) a~\emph{factor semiring} of a semiring $S$ if and only if $T$ is a homomorphic image of $S$.\goodbreak

\section{Weighted Turing Machines}

We now define \emph{weighted Turing machines}, the basic model for our later considerations, related to~nondeterministic Turing machines in~the~same way as weighted finite
automata relate to nondeterministic finite automata without weights.
A~weight from a~semiring is thus assigned to~each transition of~a~weighted \mbox{Turing} machine; the~value of~a~computation is then obtained by~taking the~product 
of~its constituent transition weights, and~the~value of~an~input word~$w$ is the~sum of~values of~all computations~on~$w$.  
Weighted Turing machines were introduced as ``algebraic Turing machines'' by C.~Damm, M.~Holzer, and~P.~McKenzie~\cite{damm2002a}; the~change in~terminology reflects
the~natural place of these machines in~weighted automata theory.  

To guarantee validity of the above-described definition of weighted Turing machine semantics, we confine ourselves to machines with finitely many computations upon each input word. 
It is clear that nothing important is lost by such a~restriction when it comes to~complexity questions.
Moreover, we only consider single-tape weighted Turing machines for simplicity, as we eventually embark upon the~study of~properties that
do not depend on the number of tapes. Nevertheless, multitape weighted Turing machines can be defined by analogy.    

\begin{definition}
Let $S$ be a semiring and $\Sigma$ an alphabet. A~\emph{weighted Turing machine} over $S$ and input alphabet $\Sigma$ is
a~septuple $\mathcal{M} = (Q,\Gamma,\Delta,\sigma,q_0,F,\square)$, where $Q$ is a~finite set of states,
$\Gamma \supseteq \Sigma$ is a~working alphabet, $\Delta \subseteq (Q \setminus F) \times \Gamma \times Q \times (\Gamma \setminus \{\square\}) \times \{-1,0,1\}$
is a~set of transitions, $\sigma\colon \Delta \to S \setminus \{0\}$ is~a~transition weighting function, $q_0 \in Q$ is the initial state, $F \subseteq Q$ is the set of accepting states, and $\square \in \Gamma \setminus \Sigma$ is~the~``blank'' symbol.
\end{definition}

A transition $(p,c,q,d,s) \in \Delta$ has the following interpretation: if $\mathcal{M}$ finds itself in some state $p$, while a~letter~$c$ is being read 
by the machine's head, then $\mathcal{M}$ can perform a~computation step, in~which the state is changed to $q$, the symbol read by the head is rewritten to $d$, and finally the head moves $s$ cells to~the~right.
Note that there are no transitions leading from accepting states.

A \emph{configuration} of $\mathcal{M}$ can be defined in the same way as for nondeterministic Turing machines: it~is
a~unique description of~the~machine's state, the~contents of~the~working tape, as well as the position of~the~machine's head (any of the usual formal definitions is applicable).
Moreover, if $e = (p,c,q,d,s) \in \Delta$ is a~transition and $C_1,C_2$ are configurations of $\mathcal{M}$, we write 
$C_1 \rightarrow_e C_2$ if $C_1$ is a configuration with state $p$ and~the~head reading $c$, while~$C_2$ is obtained from $C_1$ by changing the~state to $q$,
rewriting the originally read $c$ to $d$, and moving the head $s$ cells to the right. We write $C_1 \rightarrow C_2$ if $C_1 \rightarrow_e C_2$
for some $e \in \Delta$.

Moreover, let us define a \emph{computation} of $\mathcal{M}$ to be a~finite sequence
$\gamma = (C_0, e_1, C_1, e_2, C_2, \ldots, C_{t - 1}, e_t, C_{t})$
such that $t \in \mathbb{N}$, $C_0,\ldots,C_t$ are configurations of~$\mathcal{M}$, $e_1,\ldots,e_t \in \Delta$ are transitions of~$\mathcal{M}$, $C_{k-1} \rightarrow_{e_k} C_{k}$ for $k = 1,\ldots,t$, 
and $C_0$ is a~configuration with the~initial state~$q_0$, a~word from $\Sigma^*$ on~the~working tape, and~the~head at~the~leftmost non-blank cell (if there is some).
We then write $\lvert\gamma\rvert := t$ to denote the~\emph{length} of~$\gamma$ and~$\sigma(\gamma) := \sigma(e_1)\sigma(e_2)\ldots\sigma(e_t)$ to denote the~\emph{value} of~$\gamma$.
We say that $\gamma$ is a~computation on~$w \in \Sigma^*$, and~write $\lambda(\gamma) = w$, if $C_0$ is a~configuration with $w$ on~the~working tape. 
We call $\gamma$ \emph{accepting} if $C_{t}$ is a~configuration with a~state from~$F$. We denote the set of~all computations of~$\mathcal{M}$ by~$C(\mathcal{M})$,
and~the~set of all accepting computations by $A(\mathcal{M})$.
Note that in general, computations might be lengthened by~going through some additional transitions -- but this is never the~case for~accepting computations.     

Now, the behaviour of a weighted Turing machine should be given by the sum of~the~monomials $\sigma(\gamma)\,\lambda(\gamma)$ over all accepting computations $\gamma$.
As this sum is infinite in general, the~behaviour is not well-defined for~all Turing machines over all semirings in~this way. For this reason, we confine ourselves
to what we call \emph{halting} weighted Turing machines in what follows: we say that a~weighted Turing machine~$\mathcal{M}$ over~a~semi\-ring~$S$~and~input alphabet~$\Sigma$ is~\emph{halting} if
the~set $C_w(\mathcal{M}) := \{\gamma \in C(\mathcal{M}) \mid \lambda(\gamma) = w\}$ is finite for all $w$ in $\Sigma^*$. The sum of $\sigma(\gamma)\,\lambda(\gamma)$
over all $\gamma \in A(\mathcal{M})$ is clearly locally finite -- and hence well-defined -- for such machines.
The~terminology of~``halting'' machines comes from the observation that finiteness of $C_w(\mathcal{M})$ is equivalent to~the~nonexistence of~an~infinite sequence $(C_0, e_1, C_1, e_2, C_2, \ldots)$
such that $(C_0, e_1, C_1, e_2, C_2, \ldots, C_{t - 1}, e_t, C_{t})$ is a~computation of~$\mathcal{M}$ on~$w$ for all $t \in \mathbb{N}$, by K\"onig's infinity lemma.\goodbreak      

\emph{All weighted Turing machines are assumed to be halting in what follows}. It is not hard to see that this is no real restriction when it comes to questions of computational complexity.   

\begin{definition}
Let $S$ be a semiring, $\Sigma$ an alphabet, and $\mathcal{M}$ a (halting) weighted Turing machine over $S$ and $\Sigma$. The \emph{behaviour} of $\mathcal{M}$ is
a formal power series $\|\mathcal{M}\| \in S\llangle\Sigma^*\rrangle$ defined by 
\begin{displaymath}
\|\mathcal{M}\| := \sum_{\gamma \in A(\mathcal{M})} \sigma(\gamma)\,\lambda(\gamma),
\end{displaymath}
the sum being over a locally finite family of monomials.
\end{definition}

One could define weighted Turing machines over algebras more general than semirings -- for instance, over strong bimonoids~\cite{ciric2010a,droste2010a} -- in~a~similar way.
Nevertheless, we stick with semirings here, as they provide the~arguably most robust and~best explored mathematical framework for~weighted automata theory.
Possible extensions are left for future research.

\section{$\mathbf{NP}[S]$: Complexity Classes of Power Series}

Let $\mathcal{M} = (Q,\Gamma,\Delta,\sigma,q_0,F,\square)$ be a weighted Turing machine over $S$ and~$\Sigma$.
Given a~word $w \in \Sigma^*$, we~write $\mathsf{TIME}(\mathcal{M},w)$ for the \emph{maximal} length of~a~computation of~$\mathcal{M}$ on~$w$:
\begin{displaymath}
\mathsf{TIME}(\mathcal{M},w) := \mxm\{\lvert\gamma\rvert \mid \gamma \in C_w(\mathcal{M})\};
\end{displaymath} 
note that the maximum is well-defined, as all weighted Turing machines are assumed to be halting. Moreover, given $n \in \mathbb{N}$, we write
\begin{displaymath}
\mathsf{TIME}(\mathcal{M},n) := \mxm\{\mathsf{TIME}(\mathcal{M},w) \mid w \in \Sigma^*;~ \lvert w\rvert \leq n\}.
\end{displaymath}  

When $f\colon \mathbb{N} \to \mathbb{N}$ is a~function, we write $\mathsf{1NTIME}[S,\Sigma](f(n))$ for the set of all power series $r \in S\llangle\Sigma^*\rrangle$
such that $r = \|\mathcal{M}\|$ for~some weighted Turing machine $\mathcal{M}$ over~$S$~and~$\Sigma$ satisfying $\mathsf{TIME}(\mathcal{M},n) = O(f(n))$.
The~``$\mathsf{1}$''~stands for~single-tape machines in this notation. Moreover, let us define the~complexity class $\mathsf{1NTIME}[S](f(n))$ by
\begin{displaymath}
\mathsf{1NTIME}[S](f(n)) := \bigcup_{\text{$\Sigma$ is an alphabet}} \mathsf{1NTIME}[S,\Sigma](f(n)).
\end{displaymath}

The class $\mathbf{NP}[S]$, which is the counterpart of $\mathbf{NP}$ for formal power series over a~semiring~$S$, can now be defined in an expectable way;  
C.~Damm, M.~Holzer, and~P.~McKenzie~\cite{damm2002a} denote this class by $S\hyp\#\mathbf{P}$.

\begin{definition}
Let $S$ be a semiring. The class $\mathbf{NP}[S]$ is given by
\begin{displaymath}
\mathbf{NP}[S] := \bigcup_{k \in \mathbb{N}} \mathsf{1NTIME}[S](n^k).
\end{displaymath}
\end{definition}

It turns out that weighted Turing machines and the classes $\mathbf{NP}[S]$ can be used to capture a~large variety of~well-known complexity-theoretic settings, providing a~natural framework for their consistent study.
The~first five of the following examples were already observed by C.~Damm, M.~Holzer, and~P.~McKenzie~\cite{damm2002a}.        

\begin{example}
Weighted Turing machines over the Boolean semiring $\mathbb{B} = (\mathbb{B},\lor,\land,0,1)$ can obviously be identified with ordinary nondeterministic Turing machines without weights.
Hence, $\mathbf{NP}[\mathbb{B}]$ can be identified with the~usual complexity class $\mathbf{NP}$.
\end{example}

\begin{example}
\label{ex:sharpP}
Let $\mathcal{M} = (Q,\Gamma,\Delta,\sigma,q_0,F,\square)$ be a weighted Turing machine over the~semiring of~natural numbers $\mathbb{N} = (\mathbb{N},+,\cdot,0,1)$ and over some alphabet $\Sigma$.
When each transition is weighted by~$1$ -- \emph{i.e.}, $\sigma(e) = 1$ for~all $e \in \Delta$ -- the~coefficient of~each $w \in \Sigma^*$ in~$\|\mathcal{M}\|$  
is the number of~accepting computations of~$\mathcal{M}$~on~$w$. Moreover, each weighted Turing machine $\mathcal{M}$ over $\mathbb{N}$ is equivalent to~some other weighted Turing machine~$\mathcal{M}'$ 
such that all transitions of~$\mathcal{M}'$ are weighted by~$1$ and~$\mathcal{M}'$ runs in~polynomial time if~and~only~if~$\mathcal{M}$~does.
To~construct $\mathcal{M}'$ from $\mathcal{M}$, it is enough to replace each transition $e = (p,c,q,d,s)$ of $\mathcal{M}$ by transitions $\left(p,c,[e,1],c,0\right),\ldots,\left(p,c,[e,\sigma(e)],c,0\right)$ 
and~$\left([e,1],c,q,d,s\right),\ldots,\left([e,\sigma(e)],c,q,d,s\right)$, where $[e,1],\ldots,[e,\sigma(e)]$ are new states.
It follows that $\mathbf{NP}[\mathbb{N}]$ can be identified with the counting class $\#\mathbf{P}$ \cite{valiant1979b}. 
\end{example}\goodbreak

\begin{example}
A~weighted Turing machine $\mathcal{M}$ over the~finite field $\mathbb{F}_2 = \mathbb{Z}/2\mathbb{Z}$ and~over an~alphabet~$\Sigma$
can~only contain transitions weighted by~$1$. It is then immediate that $\|\mathcal{M}\|$ is a~power series in~$\mathbb{F}_2\llangle\Sigma^*\rrangle$ such that
$(\|\mathcal{M}\|,w)$ is, for~each $w \in \Sigma^*$, the~parity of~the~number of accepting computations of~$\mathcal{M}$ on~$w$. As~a~result, $\mathbf{NP}[\mathbb{Z}_2]$ can be identified 
with the complexity class $\oplus\mathbf{P}$ -- \emph{i.e.}, ``parity-$\mathbf{P}$'' \cite{papadimitriou1983a}.
\end{example}

\begin{example}
Let $k \geq 2$ be a~natural number. Then similarly as above, a~weighted Turing machine $\mathcal{M}$ over the~ring $\mathbb{Z}/k\mathbb{Z}$ with all transitions weighted by~$1$
realises a~power series $\|\mathcal{M}\|$ such that $(\|\mathcal{M}\|,w)$ is, for~each~$w \in \Sigma^*$, the~congruence class of~the~number of~accepting computations of~$\mathcal{M}$ on~$w$ modulo~$k$.
Moreover, the~same reasoning as in~Example~\ref{ex:sharpP} can be used to observe that every weighted Turing machine over $\mathbb{Z}/k\mathbb{Z}$ is equivalent to some machine with all transitions
weighted by $1$. It follows that $\mathbf{NP}[\mathbb{Z}/k\mathbb{Z}]$ is~precisely the~class of~all series over $\mathbb{Z}/k\mathbb{Z}$ with support in~$\mathbf{MOD_{k}P}$~\cite{beigel1992a,cai1990a}. 
\end{example}

\begin{example}
The class $\mathbf{GapP}$, introduced in \cite{gupta1991a,gupta1995a} as $\mathbb{Z}\#\mathbf{P}$, and independently in \cite{fenner1994a}, is
the~closure of~the~class $\#\mathbf{P}$ under subtraction. It can be equivalently described as the class consisting of all functions $\mathrm{gap}_{\mathcal{M}}\colon\Sigma^* \to \mathbb{Z}$, for some alphabet $\Sigma$,
such that $\mathrm{gap}_{\mathcal{M}}(w)$ is the difference between the number of~accepting and~rejecting computations of~some polynomial-time nondeterministic Turing machine~$\mathcal{M}$ on~$w \in \Sigma^*$~\cite{fenner1994a}.
Given any polynomial-time nondeterministic Turing machine~$\mathcal{M}$, one can construct a~polynomial-time weighted Turing machine $\mathcal{M}'$ over the~ring of~integers~$\mathbb{Z}$
that first simulates $\mathcal{M}$ using transitions weighted by~$1$. If $\mathcal{M}$ accepts, $\mathcal{M}'$ makes a single step using a transition weighted by $1$ and accepts;
if $\mathcal{M}$ rejects, $\mathcal{M}'$ makes a single step using a transition weighted by~$-1$ and \emph{accepts} as well. Clearly $(\|\mathcal{M}'\|,w) = \mathrm{gap}_{\mathcal{M}}(w)$ for~all~$w \in \Sigma^*$.
Conversely, a~reasoning similar to the one of~Example~\ref{ex:sharpP} shows that every weighted Turing machine $\mathcal{M}$ over~$\mathbb{Z}$ can be assumed to 
only contain transitions weighted by $1$ or $-1$. Given~$\mathcal{M}$ like this, one can construct a~nondeterministic Turing machine $\mathcal{M}'$ that simulates
$\mathcal{M}$ and~maintains in~state the~product of~weights of~the~transitions used in~the~computation so far -- which is always either~$1$, or~$-1$.
If~the~computation of~$\mathcal{M}$ accepts with value~$1$, then $\mathcal{M}'$ accepts; if it accepts with value~$-1$, $\mathcal{M}'$ rejects; if it rejects,
$\mathcal{M}'$ nondeterministically branches into two states such that one of~them is accepting and~the~other one is rejecting. Clearly $\mathrm{gap}_{\mathcal{M}'}(w) = (\|\mathcal{M}\|,w)$ 
for~all $w \in \Sigma^*$, so that $\mathbf{NP}[\mathbb{Z}]$ can be identified with $\mathbf{GapP}$. 
\end{example}

We now identify several new examples of~known complexity classes that can be described as~$\mathbf{NP}[S]$ for~a~suitable semiring~$S$.

\begin{example}
\emph{Fuzzy Turing machines} in the sense of \cite{wiedermann2004a} can be viewed as weighted Turing machines over a semiring as well.
A~\emph{triangular norm} (or \emph{t-norm}, for short) is an associative and commutative binary operation $\ast$ on~the~real interval $[0,1]$ that is nondecreasing -- \emph{i.e.}, $x_1 \ast y \leq x_2 \ast y$ whenever $x_1,x_2,y \in [0,1]$ are such that $x_1 \leq x_2$ --
and the~equality $1 \ast x = x$ is satisfied for all $x$ from the interval in~consideration~\cite{klement2000a}. It~follows that $\ast$ distributes over $\mxm$ and $0 \ast x = 0$ holds for all $x \in [0,1]$. Thus, $F_{\ast} = ([0,1],\mxm,\ast,0,1)$ is a~semiring for each t-norm~$\ast$.
Every fuzzy Turing machine $\mathcal{M}$, as~understood in~\cite{wiedermann2004a}, can then be viewed as~a~weighted Turing machine over the~semiring $F_{\ast}$ for~some t-norm $\ast$.
Coefficients of~particular words in~$\|\mathcal{M}\|$ represent their degrees of~membership to the~fuzzy language realised by~$\mathcal{M}$.
The~class~$\mathbf{NP}[F_{\ast}]$ can~thus~be interpreted as~the~class of~all \emph{fuzzy} languages realisable by fuzzy Turing machines with t-norm $\ast$ in~polynomial time.
(Note that fuzzy machines can also be seen as acceptors for ``ordinary'' languages~\cite{wiedermann2004a}.)   
\end{example}

\begin{example}
\label{ex:transducers}
Let $\Sigma_1$ and $\Sigma_2$ be alphabets, and $\mathcal{M}$ a weighted Turing machine with input alphabet $\Sigma_1$ over the~semiring of~finite languages \smash{$2_{\mathit{fin}}^{\Sigma_2^*} = (2_{\mathit{fin}}^{\Sigma_2^*},\cup,\cdot,\emptyset,\{\eps\})$}.
Then $\|\mathcal{M}\|$ can be viewed as~a~multivalued function assigning a~finite subset of~$\Sigma_2^*$ to~each $w \in \Sigma_1^*$.
It is easy to see that $\mathcal{M}$ can be turned into a~nondeterministic transducer machine -- in the sense of~\cite{book1985a} -- such that there is 
an~accepting computation on~$u \in \Sigma_1^*$ with the~output tape containing a~word $v \in \Sigma_2^*$ at~its end if~and~only~if $v \in (\|\mathcal{M}\|,u)$.
Conversely, given a~nondeterministic transducer machine realising some multivalued function, it is possible to construct an~``equivalent'' weighted Turing machine 
over \smash{$2_{\mathit{fin}}^{\Sigma_2^*}$}. What needs to be done is to first simulate the transducer on~a~single tape of~a~weighted Turing machine with polynomial overhead, while 
the~transitions responsible for~this simulation are weighted by~$\{\eps\}$. After the~simulation finally reaches an~accepting state of~the~original transducer,
the~simulating weighted machine just goes over the output from left to right using a~transition weighted by $\{c\}$ whenever $c \in \Sigma_2$ is being read; only then the~simulating machine accepts.   
This in particular implies that there is a~weighted Turing machine realising a~series \smash{$r \in 2_{\mathit{fin}}^{\Sigma_2^*}\llangle\Sigma_1^*\rrangle$} in polynomial time if and only
if there is a~polynomial-time nondeterministic transducer machine realising $r$ interpreted as a multivalued function. As a~result, the union of~\smash{$\mathbf{NP}[2_{\mathit{fin}}^{\Sigma^*}]$} over all alphabets $\Sigma$ can be interpreted
as the class $\mathbf{NPMV}$ of~all multivalued functions realised by nondeterministic polynomial-time transducer machines~\cite{book1985a}.\goodbreak 
\end{example}

\begin{example}
Multivalued functions of the previous example are, in principle, set-valued functions. 
In~case the~free semiring $\mathbb{N}\langle\Sigma_2^*\rangle = (\mathbb{N}\langle\Sigma_2^*\rangle,+,\cdot,0,1)$ is used instead of \smash{$2_{\mathit{fin}}^{\Sigma_2^*}$}, then 
the setting is changed from set-valued functions to ``multiset-valued'' functions computed by ``nondeterministic transducer machines with counting'',
\emph{i.e.}, transducer machines in which each accepting computation with input $u \in \Sigma_1^*$ and~output $v \in \Sigma_2^*$ adds $1$ to the multiplicity of~$v$ in~the~finite multiset corresponding to~$u$.
The~union of~$\mathbf{NP}[\mathbb{N}\langle\Sigma^*\rangle]$ over all alphabets~$\Sigma$ then corresponds to the class of all multiset-valued functions computed
by~``nondeterministic polynomial-time transducer machines with counting''. As we observe later in this article, weighted computation over free semirings 
can be described as ``hardest'' among all semirings (cf.~Proposition~\ref{prop:division}).   
\end{example}

\begin{example}
The \emph{radix order} $\preceq$ on $\{0,1\}^*$ is defined for all $x,y \in \{0,1\}^*$ by $x \preceq y$ if and only if either $\lvert x\rvert < \lvert y\rvert$, or $\lvert x\rvert = \lvert y\rvert$ and $x$ is smaller than or equal to $y$ according to the lexicographic order.
Let us consider the semiring $S_{\mxm} = (\{0,1\}^* \cup \{-\infty\},\mxm,\cdot,-\infty,\eps)$ such that the~restriction of~$\mxm$~to~$\{0,1\}^*$ is the~maximum according 
to the~radix order~$\preceq$,
$\mxm(x,-\infty) = \mxm(-\infty,x) = x$ for all $x \in S_{\mxm}$, the~restriction of~$\cdot$~to~$\{0,1\}^*$ is the usual concatenation operation,
and~$x \cdot (-\infty) = (-\infty) \cdot x = -\infty$ for~all~$x \in S_{\mxm}$. If $\mathrm{num}(x)$ denotes a~number with binary representation $x \in \{0,1\}^*$, then 
$\varphi\colon \{0,1\}^* \to \mathbb{N}$, defined by~$\varphi\colon x \mapsto \mathrm{num}(1x) - 1$, is clearly an~isomorphism of linearly ordered sets $(\{0,1\}^*,\preceq)$ and~$(\mathbb{N},\leq)$.

Given a~weighted Turing machine $\mathcal{M}$ over $S_{\mxm}$ with input alphabet $\Sigma$, it is straightforward to construct
a~nondeterministic Turing machine $\mathcal{M}'$ with binary encoded nonnegative integer output such that
an~accepting configuration with output $\varphi(x)$ for $x \in \{0,1\}^*$ is reachable in $\mathcal{M}'$ upon an~input~word~$w$
if~and~only~if there is an accepting computation $\gamma$ of the weighted Turing machine $\mathcal{M}$ such that $\lambda(\gamma) = w$ and $\sigma(\gamma) = x$. 
     
A converse construction can be done similarly as in Example \ref{ex:transducers}.
Moreover, it is easy to see that both constructions result in at most polynomial overhead.   

Hence, as the maximum is used as an~additive operation of~the~semiring~$S_{\mxm}$, the~class $\mathbf{NP}[S_{\mxm}]$ corresponds \emph{in principle}
to maximisation problems in the class $\mathbf{OptP}$ of \cite{krentel1988a}:
the class $\mathbf{OptP}$ consists of~all problems, in~which the objective is to compute the~\emph{value} of~an~optimal solution to
a~given instance of~an~optimisation problem in $\mathbf{NPO}$ \cite{kann1992a,kann1995a,crescenzi1999a};
a~minor difference between this class and~the~$S_{\mxm}$-weighted setting is that the value $-\infty$ is not permitted for problems in $\mathbf{OptP}$.     

In addition, one can also describe the~class of~minimisation problems in~$\mathbf{OptP}$ using the semiring $S_{\mnm} = (\{0,1\}^* \cup \{\infty\},\mnm,\cdot,\infty,\eps)$, for which the operations are defined in a similar way as for $S_{\mxm}$.  
\end{example}

\begin{example}
Considering the \emph{max-plus} -- or \emph{arctic}~\cite{droste2009b} -- \emph{semiring} $\mathbb{N}_{\mxm} = (\mathbb{N} \cup \{-\infty\},\mxm,+,-\infty,0)$ of~natural numbers instead of~$S_{\mxm}$
has an~effect of~requiring unary outputs in~the~corresponding nondeterministic Turing machine instead of~binary outputs. The~complexity class $\mathbf{NP}[\mathbb{N}_{\mxm}]$
thus roughly corresponds to~maximisation problems from the class $\mathbf{OptP}[O(\log n)]$ of~\cite{krentel1988a}, which is the same to $\mathbf{NPO~PB}$ \cite{kann1992a,kann1995a,crescenzi1999a}
as $\mathbf{OptP}$ is to $\mathbf{NPO}$. 

Moreover, the class of~minimisation problems from $\mathbf{OptP}[O(\log n)]$ can be captured via the \emph{tropical semiring} $\mathbb{N}_{\mnm} = (\mathbb{N} \cup \{\infty\},\mnm,+,\infty,0)$.  
\end{example}

In~addition to $\mathbf{NP}[S]$, other time complexity classes of power series can be introduced as well by generalising the~usual definition of~a~nondeterministic time complexity class to weighted Turing machines --
for~instance,
\begin{displaymath}
\mathbf{NEXPTIME}[S] := \bigcup_{k \in \mathbb{N}} \mathsf{1NTIME}[S]\left(2^{n^k}\right).
\end{displaymath}
Similarly, weighted space complexity classes can be defined using weighted multitape Turing machines with a~read-only input tape.\goodbreak

\section{$\mathbf{FP}[S]$: Complexity Classes of Semiring-Valued Functions}

By analogy to the relation between $\mathbf{P}$ and $\mathbf{NP}$, the~class $\mathbf{FP}$ \cite{arora2009a} is usually considered to be the proper ``deterministic counterpart'' of $\#\mathbf{P}$ in the presence of counting.
In~this context, $\mathbf{FP}$ can be described as the~class of~all functions~$f$ computable by polynomial-time deterministic Turing machines that, given
an~input word~$w$, return some natural number $f(w)$ encoded in~binary; it~is known that $\mathbf{FP} \subseteq \#\mathbf{P}$ and~that the~equality $\mathbf{FP} = \#\mathbf{P}$ would imply $\mathbf{P} = \mathbf{NP}$.

In a similar way, we now define a ``deterministic counterpart'' $\mathbf{FP}[S]$ to the class $\mathbf{NP}[S]$ for an~arbitrary semiring~$S$; as we observe, this coincides with $\mathbf{P}$
in case $S = \mathbb{B}$ and with $\mathbf{FP}$ in case $S = \mathbb{N}$. As~a~basic ingredient for the definition of $\mathbf{FP}[S]$, we use the~complexity classes
$\mathbf{FP}_G[S,\Sigma]$ for a~finite alphabet $\Sigma$ and~a~finite subset~$G$ of~$S$. The class $\mathbf{FP}_G[S,\Sigma]$ consists of all functions $f$ from $\Sigma^*$ to
the subsemiring $\langle G\rangle$ of $S$ generated by $G$, for which there is a~polynomial-time deterministic Turing machine that, given $w \in \Sigma^*$, outputs some word in~the~algebra of~terms~$T(G)$
that evaluates to~$f(w)$ in the semiring~$S$. The~class $\mathbf{FP}[S]$ is then defined to be the union of $\mathbf{FP}_G[S,\Sigma]$ over all such $G$ and $\Sigma$.
As~functions from~$\Sigma^*$ to~$S$ are formally the~same objects as power series over~$\Sigma$ with coefficients in~$S$, it is reasonable to ask about the~relationship of~the~class $\mathbf{FP}[S]$ to $\mathbf{NP}[S]$ --
and~indeed, we easily obtain the~inclusion $\mathbf{FP}[S] \subseteq \mathbf{NP}[S]$ for~all semirings~$S$.              

\begin{definition}
Let $S$ be a semiring, $G$ a finite subset of $S$, and $\Sigma$ an alphabet. The~class $\mathbf{FP}_G[S,\Sigma]$ consists of all functions $f\colon \Sigma^* \to \langle G\rangle$,
for which there exists a polynomial-time multitape deterministic Turing machine with output that transforms each input word $w \in \Sigma^*$ to some term $t(w) \in T(G)$ satisfying $h_G[S](t(w)) = f(w)$.
\end{definition}

The~following lemma shows that the class $\mathbf{FP}_G[S,\Sigma]$ is actually the same for all $G$ generating the~same subsemiring of~$S$.

\begin{lemma}
\label{le:generators}
Let $S$ be a semiring, $G,H$ finite subsets of $S$, and $\Sigma$ an alphabet. If $\langle G\rangle \subseteq \langle H\rangle$, then $\mathbf{FP}_G[S,\Sigma] \subseteq \mathbf{FP}_H[S,\Sigma]$. As a result, $\mathbf{FP}_G[S,\Sigma] = \mathbf{FP}_H[S,\Sigma]$ whenever $\langle G\rangle = \langle H\rangle$. 
\end{lemma}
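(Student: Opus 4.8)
The plan is to exploit the hypothesis $\langle G\rangle \subseteq \langle H\rangle$: it says precisely that every generator $g \in G$ already lies in $\langle H\rangle$, and hence equals $h_H[S](s_g)$ for some term $s_g \in T(H)$. First I would fix one such witnessing term $s_g$ for each $g \in G$; since $G$ is finite, this is a constant-size substitution table. The idea is then to transform any term over $G$ into a term over $H$ of the same value in $S$ by textually replacing each generator symbol $g$ by the string $s_g$, leaving the symbols $0,1,+,\cdot$ and the parentheses untouched.

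The core of the argument is that this replacement preserves evaluation. Since the term algebra $T(G)$ is absolutely free on $G$ among algebras of type $(2,2,0,0)$, the assignment $g \mapsto s_g$ extends to a unique homomorphism $\phi\colon T(G) \to T(H)$, which necessarily fixes the constants $0,1$ and commutes with $+$ and $\cdot$. Now $h_H[S]\circ\phi$ and $h_G[S]$ are both homomorphisms $T(G) \to S$, and they agree on the generators because $(h_H[S]\circ\phi)(g) = h_H[S](s_g) = g = h_G[S](g)$; by the same freeness (uniqueness of the evaluation homomorphism) they coincide. This yields the identity $h_H[S](\phi(t)) = h_G[S](t)$ for every $t \in T(G)$.

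With this identity in hand the machine construction is routine. Given $f \in \mathbf{FP}_G[S,\Sigma]$ witnessed by a polynomial-time machine $M$ that outputs a term $t(w) \in T(G)$ with $h_G[S](t(w)) = f(w)$, I would build a machine $M'$ that runs $M$ and then scans its output, replacing each generator symbol $g$ by the fixed string $s_g$; this produces $\phi(t(w)) \in T(H)$. By the identity above it evaluates to $f(w)$, and $f(w) \in \langle G\rangle \subseteq \langle H\rangle$ gives the correct codomain, so $M'$ witnesses $f \in \mathbf{FP}_H[S,\Sigma]$. For the running time, each $s_g$ has constant length and $G$ is finite, so the substitution expands the term by at most a constant factor and can be carried out in time linear in $|t(w)|$; since $M$ runs in polynomial time, $|t(w)|$ is polynomially bounded, and $M'$ remains polynomial-time.

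The concluding assertion is then immediate: when $\langle G\rangle = \langle H\rangle$ the inclusion applies in both directions, giving $\mathbf{FP}_G[S,\Sigma] = \mathbf{FP}_H[S,\Sigma]$. I expect no serious obstacle here; the only step that needs care is the homomorphism identity $h_H[S]\circ\phi = h_G[S]$, which rests entirely on the universal property of the term algebra, together with the routine bookkeeping that the substitution preserves both well-formedness of terms and polynomial running time.
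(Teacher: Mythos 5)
Your proposal is correct and follows essentially the same route as the paper's proof: fix a witnessing term in $T(H)$ for each generator $g \in G$, simulate the original machine, and textually substitute these terms into its output, noting that the substitution preserves evaluation and adds only constant-factor overhead. The only difference is cosmetic -- you justify the evaluation-preservation step via the universal property of the term algebra, where the paper simply declares it clear.
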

\begin{proof}
Let $f\colon \Sigma^* \to \langle G\rangle$ be in $\mathbf{FP}_G[S,\Sigma]$. Let $\mathcal{M}$ be a polynomial-time deterministic Turing machine that outputs a~term $t(w) \in T(G)$ such that $h_G[S](t(w)) = f(w)$ for~each input word $w \in \Sigma^*$.
As $\langle G\rangle \subseteq \langle H\rangle$, there is a~term $t_g \in T(H)$ for each $g \in G$ such that $h_H[S](t_g) = g$.
Let $\mathcal{M}'$ be a~deterministic Turing machine that first simulates $\mathcal{M}$ on each $w \in \Sigma^*$, so that it outputs $t(w)$ on~a~working tape.
After this is done, $\mathcal{M}'$ copies $t(w)$ to the output tape, while replacing each $g \in G$ by the term $t_g$ (which is of~length independent of~$w$). 
It is clear that the resulting term $t'(w)$ satisfies $h_H[S](t'(w)) = h_G[S](t(w)) = f(w)$ and that~$\mathcal{M}'$ operates in polynomial time. Hence, $f$ is in $\mathbf{FP}_H[S,\Sigma]$.
\end{proof}

\begin{definition}
Let $S$ be a semiring. For each alphabet $\Sigma$, let
\begin{displaymath}
\mathbf{FP}[S,\Sigma] := \bigcup_{\substack{G \subseteq S \\ \text{$G$ finite}}} \mathbf{FP}_G[S,\Sigma].  
\end{displaymath}
Similarly, given a finite subset $G$ of $S$, let
\begin{displaymath}
\mathbf{FP}_G[S] := \bigcup_{\text{$\Sigma$ is an alphabet}} \mathbf{FP}_G[S,\Sigma].
\end{displaymath}
We then write
\begin{displaymath}
\mathbf{FP}[S] := \bigcup_{\text{$\Sigma$ is an alphabet}} \mathbf{FP}[S,\Sigma] = \bigcup_{\substack{G \subseteq S \\ \text{$G$ finite}}} \mathbf{FP}_G[S].
\end{displaymath}
\end{definition}

It follows by Lemma \ref{le:generators} that $\mathbf{FP}[S] = \mathbf{FP}_G[S]$ when $S$ is a semiring is finitely generated by $G$.\goodbreak

\begin{example}
Let $S = \mathbb{B}$. As~evaluation of Boolean expressions can be done in polynomial time, each $f \in \mathbf{FP}[\mathbb{B}]$ corresponds to a problem in $\mathbf{P}$.
Conversely, each problem in $\mathbf{P}$ can be turned into a function in~$\mathbf{FP}[\mathbb{B}]$ by outputting the term $1$ if an input is accepted and the term $0$ if it is rejected.
The class $\mathbf{FP}[\mathbb{B}]$ can thus be identified with $\mathbf{P}$. 
\end{example}

\begin{example}
The situation is similar for $S = \mathbb{N}$. Here $\mathbf{FP}[\mathbb{N}] = \mathbf{FP}_{\emptyset}[\mathbb{N}]$, as $\langle\emptyset\rangle = \mathbb{N}$. Outputs of~deterministic Turing machines
corresponding to functions in this class take the~form of~arithmetic expressions consisting of~the~constants~$0$ and~$1$ and~operators $+$ and $\cdot$.
Such expressions can be evaluated in polynomial time in binary -- every function in~$\mathbf{FP}[\mathbb{N}]$ thus also is in~$\mathbf{FP}$.
On the other hand, when a~function $f\colon \Sigma^* \to \mathbb{N}$ is in $\mathbf{FP}$, there exists a~deterministic Turing machine that outputs, for each $w \in \Sigma^*$,
a binary representation $\mathrm{bin}(f(w))$ of $f(w)$. However, $\mathrm{bin}(f(w))$ can be transformed in~polynomial time into a~term $t(w) \in T(\emptyset)$ such that $h_{\emptyset}[\mathbb{N}](t(w)) = f(w)$ --
it suffices to observe that there is a~polynomial function $p\colon \mathbb{N} \to \mathbb{N}$ independent of~$w$ and a~set $K(w) \subseteq \{0,\ldots,p(\lvert w\rvert)\}$ such that
the~length of~$\mathrm{bin}(f(w))$ is at~most $p(\lvert w\rvert)+1$ and~the~$k$-th order bit of~$\mathrm{bin}(f(w))$ is~$1$ if~and~only~if $k \in K(w)$.
Then 
\begin{displaymath}
f(w) = \sum_{k \in K(w)} 2^k = \sum_{k \in K(w)} (1 + 1)^k = \sum_{k \in K(w)} \prod_{i = 1}^k (1 + 1).
\end{displaymath}      
The~term $t(w)$ can thus be constructed from $\mathrm{bin}(f(w))$ in~polynomial time as~a~sum of~at~most $p(\lvert w\rvert) + 1$ products of~at~most $p(\lvert w\rvert)$ terms $(1 + 1)$.
Hence $f$ is in~$\mathbf{FP}[\mathbb{N}]$ and~$\mathbf{FP}$ coincides with~$\mathbf{FP}[\mathbb{N}]$.  
\end{example}

\begin{proposition}
\label{prop:fpinclnp}
Let $S$ be a semiring. Then $\mathbf{FP}[S] \subseteq \mathbf{NP}[S]$.
\end{proposition}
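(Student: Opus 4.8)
The plan is to take a function $f \in \mathbf{FP}[S]$ and build a polynomial-time (halting) weighted Turing machine $\mathcal{M}$ over $S$ whose behaviour is $f$, after which the inclusion is immediate. By definition there is a finite set $G \subseteq S$ and a polynomial-time deterministic Turing machine $\mathcal{N}$ that transforms each input $w \in \Sigma^*$ into a term $t(w) \in T(G)$ with $h_G[S](t(w)) = f(w)$; since $\mathcal{N}$ runs in polynomial time, the length of $t(w)$ is bounded by a polynomial in $\lvert w\rvert$. The machine $\mathcal{M}$ will work in two phases. In the first, \emph{deterministic} phase it simulates $\mathcal{N}$ using only transitions weighted by $1$ and writes $t(w)$ onto its tape (the at most polynomial overhead of reducing $\mathcal{N}$ to a single-tape machine is harmless). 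In the second, \emph{nondeterministic} phase it evaluates $t(w)$ so that the sum of $\sigma(\gamma)$ over the accepting computations produced in this phase equals $h_G[S](t(w))$. As every such $\gamma$ is a computation on $w$ and the deterministic phase contributes the factor $1$ to $\sigma(\gamma)$, this gives $(\|\mathcal{M}\|,w) = f(w)$ for all $w$.

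For the evaluation phase I would traverse the syntax tree of $t(w)$, exploiting the parenthesis structure of the term together with a bounded amount of marking kept on the tape, according to the following recursive scheme. To evaluate a leaf $a \in G \cup \{0,1\}$, take one transition weighted by $a$ whenever $a \neq 0$, and otherwise pass to a rejecting dead state; this absorbs the forbidden value $a = 0$ (recall $\sigma$ never takes the value $0$) into a non-accepting branch that contributes nothing, as desired. To evaluate $(t_1 + t_2)$, nondeterministically branch into the evaluation of $t_1$ or of $t_2$. To evaluate $(t_1 \cdot t_2)$, evaluate $t_1$ and then, from the state reached at its end, continue by evaluating $t_2$. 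The evaluation of the whole term ends in the unique accepting state, so that accepting computations correspond exactly to complete traversals.

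A structural induction on a subterm $t'$ then shows that the sum of $\sigma(\gamma)$ over the partial computations traversing $t'$ equals $h_G[S](t')$: the leaf case is immediate; the sum-node case holds because the contributions of the two branches add, matching $(r_1 + r_2,w) = (r_1,w) + (r_2,w)$; and the product-node case is precisely where the distributivity of $S$ enters, since sequencing the two subtraversals multiplies their weights, so summing over all pairs of branches turns the product of the two subterm sums into the sum of the corresponding products, namely $h_G[S](t_1)\,h_G[S](t_2)$. Because the length of $t(w)$ and the depth of its syntax tree are polynomial in $\lvert w\rvert$, each computation of $\mathcal{M}$ runs in polynomial time, whence $\mathcal{M}$ is halting and polynomially time-bounded and $f = \|\mathcal{M}\| \in \mathbf{NP}[S]$.

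The hard part will be the faithful implementation of the recursive traversal on a single-tape machine, in particular the sequencing at product nodes, which must correctly resume the right continuation after each subterm is processed, together with the inductive verification of the product case, where distributivity is the essential ingredient that forces the sum of computation values to coincide with the semiring evaluation of the term.
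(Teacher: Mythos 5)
Your proposal is correct and follows essentially the same route as the paper's own proof: simulate the deterministic $\mathbf{FP}[S]$-machine with transitions weighted by $1$, then evaluate the output term $t(w)$ nondeterministically -- branching at sum nodes, sequencing at product nodes -- with distributivity of $S$ justifying the product case. Your explicit treatment of the leaf $0$ (diverting it to a rejecting dead state, since $\sigma$ cannot take the value $0$) is a detail the paper leaves implicit under ``it is trivial,'' but it does not change the argument.
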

\begin{proof}
In~case an~$\mathbf{FP}[S]$-machine $\mathcal{M}$ exists for~$r \in S\llangle\Sigma^*\rrangle$, an equivalent $\mathbf{NP}[S]$-machine~$\mathcal{M}'$ can first simulate~$\mathcal{M}$ on~its~input word~$w$, using transitions weighted by $1$ (and with at most polynomial overhead). 
Next, it can use the~output of~the~machine~$\mathcal{M}$ -- \emph{i.e.}, a~term $t \in T(G)$ for~some finite $G \subseteq S$ -- to assure that~$(\|\mathcal{M}'\|,w) = h_G[S](t)$.
It~is trivial to do so when $t$ is in~$G \cup \{0,1\}$. In~case $t = (t_1 + t_2)$ for some $t_1,t_2 \in T(G)$, then $\mathcal{M}'$ nondeterministically ``chooses'' one of~these terms and~evaluates it recursively;
for~$t = (t_1 \cdot t_2)$, the~machine~$\mathcal{M}'$ evaluates $t_1$ followed by $t_2$. Correctness follows by~distributivity of~$S$.       
\end{proof}

\section{Reductions and $\mathbf{NP}[S]$-Complete Problems}

Let us now consider reductions between problems represented by formal power series and completeness for the classes $\mathbf{NP}[S]$.
These concepts were actually already studied in~the~framework of~algebraic \mbox{Turing} machines and~the~resulting complexity classes~\cite{beaudry2007a,damm2002a}.
In~particular, M.~Beaudry and~M.~Holzer~\cite{beaudry2007a} managed to prove $\mathbf{NP}[S]$-completeness of~the~evaluation problem for~scalar tensor formulae
over additively finitely generated semirings~$S$ under polynomial-time many-one reductions. In~what follows, we show that two fundamental $\mathbf{NP}$-completeness 
results do actually generalise to the~weighted setting: for~every finitely generated semi\-ring~$S$, we prove $\mathbf{NP}[S]$-completeness
of~a~problem $\mathsf{WTMSAT}[S]$ inspired by $\mathsf{TMSAT}$ of~\cite{arora2009a}, as~well~as of~$\mathsf{SAT}[S]$, a~generalisation of~$\mathsf{SAT}$ to~weighted propositional logics;
the~latter observation generalises the~Cook-Levin theorem. 

Several different notions of reduction seem to be reasonable in~the~setting of~the~weighted classes $\mathbf{NP}[S]$ -- in fact, this is already so in~the~particular case of~counting problems, which
can be identified with series over the~semiring of~natural numbers. We mostly consider the~weakest among these reductions resulting in~the~strongest completeness requirement -- over~$\mathbb{N}$,
this is precisely the~\emph{parsimonious} reduction between counting problems. Over a~general semiring, we use the term \emph{polynomial-time many-one reduction} instead.

\begin{definition}
Let $S$ be a semiring and $\Sigma_1,\Sigma_2$ alphabets. A problem -- \emph{i.e.}, a~power series -- $r \in S\llangle\Sigma_1^*\rrangle$ is \emph{polynomially many-one reducible} to a~series $s \in S\llangle\Sigma_2^*\rrangle$,
written $r \leq_m s$, if there is a~function $f\colon \Sigma_1^* \to \Sigma_2^*$ computable deterministically in~polynomial time such that $(s,f(w)) = (r,w)$ for~all $w \in \Sigma_1^*$.
\end{definition}

As usual, we say that $s \in S\llangle\Sigma^*\rrangle$ is \emph{$\mathbf{NP}[S]$-hard} with respect to polynomial-time many-one reductions if~$r \leq_m s$ for~all $r \in \mathbf{NP}[S]$;
a~series $s$ is \emph{$\mathbf{NP}[S]$-complete} with respect to $\leq_m$ if it belongs to $\mathbf{NP}[S]$ and~at~the~same time, it is~$\mathbf{NP}[S]$-hard under $\leq_m$.\goodbreak 

\begin{theorem}
\label{th:complete}
Let $S$ be a semiring. An~$\mathbf{NP}[S]$-complete problem with respect to~$\leq_m$ exists if and only if $S$ is finitely generated. 
\end{theorem}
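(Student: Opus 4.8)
The plan is to prove the two implications separately, treating the easier ``only if'' direction first. Suppose some $s \in S\llangle\Sigma^*\rrangle$ is $\mathbf{NP}[S]$-complete with respect to $\leq_m$. Since $s \in \mathbf{NP}[S]$, I would fix a polynomial-time weighted Turing machine $\mathcal{M}$ with $\|\mathcal{M}\| = s$; as $\mathcal{M}$ has only finitely many transitions, its weighting function takes values in some finite set $W \subseteq S \setminus \{0\}$. Because every coefficient $(s,w)$ is a finite sum (by halting) of values $\sigma(\gamma)$, each of which is a finite product of elements of $W$ with the empty product being $1$, all coefficients of $s$ lie in the subsemiring $\langle W\rangle$. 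The first key step is then to note that for every $a \in S$ the elementary series with coefficient $a$ at $\eps$ and $0$ elsewhere belongs to $\mathbf{NP}[S]$: for $a \neq 0$ it is realised by a one-transition machine whose single transition reads $\square$ on the empty input and is weighted by $a$, and for $a = 0$ by the machine with no accepting computations. Hardness of $s$ then supplies a polynomial-time $f$ with $(s,f(\eps)) = a$, and since the left-hand side is a coefficient of $s$ it lies in $\langle W\rangle$; hence so does $a$. As $a$ was arbitrary, $S = \langle W\rangle$ is finitely generated.

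For the ``if'' direction I would fix a finite set $G$ with $\langle G\rangle = S$ and exhibit an explicit complete problem of universal flavour, essentially the problem $\mathsf{WTMSAT}[S]$ advertised in the introduction. Its instances encode a triple $\langle \mathcal{N}, x, 1^t\rangle$, where $\mathcal{N}$ is a weighted Turing machine over $S$ whose transition weights are presented as \emph{terms} in $T(G)$ rather than as raw semiring elements, $x$ is an input word, and $1^t$ is a unary time bound; the coefficient assigned to this instance is the sum of $\sigma(\gamma)$ ranging over accepting computations $\gamma$ of $\mathcal{N}$ on $x$ with $\lvert\gamma\rvert \leq t$.

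To place this problem in $\mathbf{NP}[S]$, I would build a single universal weighted Turing machine that reads the instance and simulates $\mathcal{N}$ on $x$ for at most $t$ steps. All simulation bookkeeping -- decoding the transition table, updating state, tape, and head -- is carried out with transitions weighted by $1$; the unary encoding of $t$ keeps the simulation polynomial. The only weights the universal machine itself uses are those in $G$ together with $1$, and whenever the simulated $\mathcal{N}$ fires a transition carrying term-weight $\tau \in T(G)$, the universal machine reproduces the semiring value $h_G[S](\tau)$ by the nondeterministic term-evaluation gadget already used in the proof of Proposition~\ref{prop:fpinclnp}, choosing one summand at each $+$ and sequencing both factors at each $\cdot$. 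Hardness is then the routine universal reduction: given $r = \|\mathcal{M}_r\| \in \mathbf{NP}[S]$ with time bound $p$, rewrite the finitely many weights of $\mathcal{M}_r$ as terms over $G$ -- possible precisely because $S = \langle G\rangle$ -- to obtain a fixed machine $\mathcal{M}_r'$, and map each $w$ to $\langle \mathcal{M}_r', w, 1^{p(\lvert w\rvert)}\rangle$, a transformation clearly computable in polynomial time whose instance receives coefficient $(r,w)$ since every accepting computation of $\mathcal{M}_r$ on $w$ already has length at most $p(\lvert w\rvert)$.

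The main obstacle will be verifying that the universal machine's behaviour has exactly the intended coefficients, that is, that the sum over its accepting computations equals $\sum_{\gamma} \sigma(\gamma)$. This is where distributivity does the real work: for a fixed accepting computation $\gamma$ of $\mathcal{N}$ the product of its term-weights evaluates to $h_G[S]$ of the product term, and expanding that evaluation via the gadget produces a family of universal-machine computations whose values sum to $\sigma(\gamma)$; summing over all admissible $\gamma$ then yields the required coefficient, exactly as in Proposition~\ref{prop:fpinclnp}. Care is needed to ensure that interleaving the simulation steps with the nondeterministic term-evaluation branches does not disturb this sum-of-products identity, but the argument is otherwise a direct lift of the classical universal-Turing-machine construction to the weighted setting.
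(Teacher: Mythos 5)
Your proposal is correct and follows essentially the same route as the paper: the ``if'' direction via the universal-machine problem $\mathsf{WTMSAT}[S]$ with term-encoded weights and a unary time bound (membership by simulation with the distributivity-based term-evaluation gadget of Proposition~\ref{prop:fpinclnp}, hardness by the routine padding reduction), and the ``only if'' direction by observing that all coefficients of a complete series lie in the finitely generated subsemiring spanned by the machine's transition weights, against which constant series witness arbitrary elements of $S$. The only differences are cosmetic -- you argue the ``only if'' direction directly rather than contrapositively, and use the series $a\,\eps$ where the paper uses $\sum_{w} a\,w$.
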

\begin{proof}
Let us first assume that $S$ is finitely generated, and let $G \subseteq S$ be a finite set such that $\langle G\rangle = S$. Each $a \in S$ can then be encoded by a finite word 
\begin{displaymath}
\tau_G(a) = g_{1,1} \cdot \ldots \cdot g_{1,m_1} + \ldots + g_{k,1} \cdot \ldots \cdot g_{k,m_k}
\end{displaymath}
with $k,m_1,\ldots,m_k \in \mathbb{N} \setminus \{0\}$ and $g_{i,j} \in G \cup \{0,1\}$ for $i = 1,\ldots,k$ and $j = 1,\ldots,m_i$, such that
\begin{displaymath}
g_{1,1} \ldots g_{1,m_1} + \ldots + g_{k,1} \ldots g_{k,m_k} = a
\end{displaymath}
holds in $S$. Every weighted Turing machine $\mathcal{M}$ over $S$ and $\Sigma = \{0,1\}$ thus admits an~effective encoding~$\langle\mathcal{M}\rangle$ -- the weight~$\sigma(e)$ of each transition $e$ is encoded via $\tau_G(\sigma(e))$
and~the~rest can be done in~the~same way as~for~nondeterministic machines without weights. It is straightforward to construct a~universal weighted Turing machine $\mathcal{U}$ over~$S$ such that
$\left(\|\mathcal{U}\|,\langle\mathcal{M}\rangle\#w\right) = (\|\mathcal{M}\|,w)$ holds for~all weighted machines $\mathcal{M}$ over~$S$ with input alphabet $\Sigma = \{0,1\}$ and~all~$w \in \Sigma^*$,
while 
\begin{displaymath}
\mathsf{TIME}\left(\mathcal{U},\langle\mathcal{M}\rangle\#w\right) \leq p\left(\lvert\langle\mathcal{M}\rangle\rvert + \lvert w\rvert\right) \cdot \mathsf{TIME}\left(\mathcal{M},w\right)^{k} + q\left(\lvert\langle\mathcal{M}\rangle\rvert + \lvert w\rvert\right)
\end{displaymath}
for some constant $k \in \mathbb{N}$ and~polynomial functions $p,q\colon \mathbb{N} \to \mathbb{N}$ independent of~$\mathcal{M}$~and~$w$.    

Next, let us consider the problem -- a~power series -- $\mathsf{WTMSAT}[S]$ inspired by~$\mathsf{TMSAT}$ for~Turing machines without weights \cite{arora2009a} and~given as follows: 
upon input of~the~form $\langle\mathcal{M}\rangle\#w\#1^m$ for~some weighted Turing machine $\mathcal{M}$ over~$S$ with input alphabet $\Sigma = \{0,1\}$, some $w \in \Sigma^*$, and some $m \in \mathbb{N}$,
let 
\begin{displaymath}
\left(\mathsf{WTMSAT}[S],\langle\mathcal{M}\rangle\#w\#1^m\right) := \sum_{\substack{\gamma \in A(\mathcal{M}) \\ \lambda(\gamma) = w \\ \lvert \gamma\rvert \leq m}} \sigma(\gamma).
\end{displaymath}
This in particular implies that $\left(\mathsf{WTMSAT}[S],\langle\mathcal{M}\rangle\#w\#1^m\right) = (\|\mathcal{M}\|,w)$ whenever $\mathsf{TIME}(\mathcal{M},w) \leq m$ holds.
Moreover, set $(\mathsf{WTMSAT}[S],x) := 0$ for all other words $x$ over the~input alphabet of~$\mathsf{WTMSAT}[S]$.

It is not hard to see that $\mathsf{WTMSAT}[S]$ is in $\mathbf{NP}[S]$ -- one can simulate $m$ steps of the~machine~$\mathcal{M}$ using the~universal machine~$\mathcal{U}$ upon input $\langle\mathcal{M}\rangle\#w$.
On~the~other hand, if a~series $r \in S\llangle\Sigma^*\rrangle$ for~$\Sigma = \{0,1\}$ belongs to~$\mathbf{NP}[S]$ -- so that there is a weighted Turing machine $\mathcal{M}$ over $S$ and~$\Sigma$ such
that $\|\mathcal{M}\| = r$ and~$\mathsf{TIME}(\mathcal{M},n) \leq cn^k + d =: p(n)$ for~some constants $c,k,d \in \mathbb{N}$ and~all~$n \in \mathbb{N}$ -- then
\begin{displaymath}
\left(\mathsf{WTMSAT}[S],\langle\mathcal{M}\rangle\#w\#1^{p(\lvert w\rvert)}\right) = (r,w)
\end{displaymath}
for all $w \in \Sigma^*$. Hence, clearly $r \leq_m \mathsf{WTMSAT}[S]$.
Moreover, series in $\mathbf{NP}[S]$ over other alphabets can be reduced to a~series over $\Sigma = \{0,1\}$ via encoding and~$\leq_m$ is clearly
transitive. This proves that $\mathsf{WTMSAT}[S]$ is $\mathbf{NP}[S]$-complete.

It remains to prove that $\mathbf{NP}[S]$ has no complete problems with respect to $\leq_m$ when $S$ is not finitely generated. Suppose for~the~purpose of~contradiction that
$S$ is not finitely generated and~that $r \in S\llangle\Sigma^*\rrangle$, for~some alphabet $\Sigma$, is $\mathbf{NP}[S]$-complete. Then there is a~weighted Turing machine $\mathcal{M} = (Q,\Gamma,\Delta,\sigma,q_0,F,\square)$
over~$S$ and~$\Sigma$ such that $\|\mathcal{M}\| = r$ and~it is easy to see that the~coefficients of~$r$ in~fact belong to the~finitely generated subsemiring $\langle G\rangle$ of~$S$ for 
$G = \{\sigma(e) \mid e \in \Delta\}$. Hence, there is an~element $a \in S \setminus \langle G\rangle$ and~it~follows that, e.g., the~series $\sum_{w \in \Sigma^*} aw$ from $\mathbf{NP}[S]$ cannot be many-one reduced to $r$.         
\end{proof}

We now proceed to describe, for~each finitely generated semiring~$S$, a~slightly more interesting example of~an~$\mathbf{NP}[S]$-complete problem: a~weighted generalisation of~the~Boolean satisfiability problem~$\mathsf{SAT}$, which we call~$\mathsf{SAT}[S]$. 
This is the~``satisfiability'' problem for~\emph{weighted propositional logics over semirings} -- that~is, for~propositional fragments of~the~weighted MSO~logics of~M.~Droste and~P.~Gastin~\cite{droste2007a,droste2009c}.
Basically, the~weighted propositional logic over~$S$ is obtained from the~usual propositional logic by~incorporating constants from~$S$ and~defining the semantics 
of~logical connectives $\lor$ and~$\land$ via the~operations~$+$ and~$\cdot$ of~the~underlying semiring~$S$; negation is permitted for~propositional variables only, for which it has the~usual semantics.
Each~weighted propositional formula $\varphi$ thus admits a~value in~$S$ as~its semantics for a~fixed assignment of~truth values to~propositional variables.\goodbreak 

Let us fix an infinite alphabet $X$ of all propositional variables for the rest of this section; all variables occurring in~our constructions below are assumed to be taken from~$X$.
The~language of~the~\emph{weighted propositional logic} over a~semiring~$S$ is built upon an infinite alphabet consisting of~all symbols in~$X$,
all elements of $S$, symbols ``$\lor$'', ``$\land$'', and ``$\lnot$'' for~logical connectives, and~symbols ``$($'' and~``$)$'' for~parentheses.    
The~\emph{syntax} of~the~weighted propositional logic over~$S$ is defined as~follows:
\begin{enumerate}
\item{For each propositional variable $x \in X$, both $x$ and $\lnot x$ are well-formed propositional formulae over $S$. Moreover, each $a \in S$ is a well-formed propositional formula over $S$. 
}
\item{Let $\varphi,\psi$ be well-formed propositional formulae over $S$. Then $(\varphi \lor \psi)$ and $(\varphi \land \psi)$ are well-formed propositional formulae over $S$ as well.
}
\item{Nothing else is a well-formed propositional formula over $S$.
}
\end{enumerate}
When the~underlying semiring~$S$ is finitely generated, we denote by~$\langle \varphi\rangle$ an~effective encoding of~a~formula~$\varphi$ into some finite alphabet $\Sigma$ independent of~$\varphi$; in~particular, variables are represented by~binary numbers and~elements of~$S$ are encoded
in~terms of~generators of~$S$ as~in~the~proof of~Theorem~\ref{th:complete}.

A \emph{truth assignment} is a mapping $V\colon X \to \{0,1\}$. The \emph{semantics} of propositional formulae over $S$ with respect to the truth assignment $V$ is defined as follows:   
\begin{enumerate}
\item{For each propositional variable $x \in X$, let $\overline{V}(x) = V(x)$; moreover, let $\overline{V}(\lnot x) = 1$ if $V(x) = 0$ and~$\overline{V}(\lnot x) = 0$ if $V(x) = 1$.
In~addition, let $\overline{V}(a) = a$ for~each $a \in S$. 
}
\item{For each two well-formed propositional formulae $\varphi,\psi$ over the~semiring~$S$, let $\overline{V}(\varphi \lor \psi) = \overline{V}(\varphi) + \overline{V}(\psi)$ 
and~$\overline{V}(\varphi \land \psi) = \overline{V}(\varphi) \cdot \overline{V}(\psi)$.
}
\end{enumerate}
Values $\overline{V}(\varphi)$ depend just on the restriction of $V$ to the set $X_{\varphi}$ of all variables $x \in X$ occurring in~$\varphi$.
In~case this restriction is given by a~mapping $W\colon X_{\varphi} \to \{0,1\}$, we also write $\overline{W}(\varphi)$ for~$\overline{V}(\varphi)$.

Observe that an equivalent of the usual propositional logic is obtained when the~Boolean semiring~$\mathbb{B}$ is~taken for~$S$.
\smallskip 

The problem $\mathsf{SAT}[S]$ is to determine, for a given weighted propositional formula~$\varphi$ over~a~finitely generated semiring~$S$ represented by its encoding $\langle\varphi\rangle$, the sum of~values of~$\varphi$ over all choices of~truth values of~variables from~$X_{\varphi}$.
The series $\mathsf{SAT}[S] \in S\llangle\Sigma^*\rrangle$ is thus given by 
\begin{displaymath}
(\mathsf{SAT}[S],\langle\varphi\rangle) = \sum_{W \in \{0,1\}^{X_{\varphi}}} \overline{W}(\varphi)
\end{displaymath}
for all formulae~$\varphi$ and by $(\mathsf{SAT}[S],w) = 0$ for all $w \in \Sigma^*$ that do~not encode weighted propositional formulae over~$S$. We now prove that this problem is $\mathbf{NP}[S]$-complete with respect to~polynomial-time many-one reductions. 

\begin{theorem}
\label{th:sat}
Let $S$ be a finitely generated semiring. Then $\mathsf{SAT}[S]$ is $\mathbf{NP}[S]$-complete with respect to $\leq_m$. 
\end{theorem}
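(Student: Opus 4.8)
The plan is to prove $\mathsf{SAT}[S] \in \mathbf{NP}[S]$ first, then establish $\mathbf{NP}[S]$-hardness by a weighted adaptation of the Cook-Levin tableau construction. For membership, I would describe a weighted Turing machine that, on input $\langle\varphi\rangle$, first checks (using transitions weighted by $1$) that the input is a valid encoding of a formula, rejecting otherwise. It then nondeterministically guesses a truth assignment $W \in \{0,1\}^{X_\varphi}$ by branching with $1$-weighted transitions, and finally evaluates $\overline{W}(\varphi)$ by recursively descending through the parse structure of $\varphi$: at an $\land$-node it evaluates the two subformulae sequentially (multiplying their contributions along a single computation path), while at an $\lor$-node it nondeterministically chooses one disjunct, so that the two branches contribute additively. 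The crucial point, justified by distributivity of $S$ exactly as in the proof of Proposition~\ref{prop:fpinclnp}, is that summing $\sigma(\gamma)$ over all accepting computations for a fixed guessed $W$ yields precisely $\overline{W}(\varphi)$; summing further over all guesses $W$ gives $\sum_{W} \overline{W}(\varphi) = (\mathsf{SAT}[S],\langle\varphi\rangle)$. One must check that the evaluation and the guessing run in time polynomial in $\lvert\langle\varphi\rangle\rvert$, which is routine since each subformula is visited a bounded number of times and each constant from $S$ appears already encoded in terms of generators.

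For hardness, I would take an arbitrary $r \in \mathbf{NP}[S]$ witnessed by a weighted Turing machine $\mathcal{M}$ over $S$ and $\Sigma$ with $\mathsf{TIME}(\mathcal{M},n) \leq p(n)$ for a polynomial $p$, and construct for each input $w$ a weighted propositional formula $\varphi_w$ such that $\sum_{W} \overline{W}(\varphi_w) = (r,w)$, with $\langle\varphi_w\rangle$ computable from $w$ in deterministic polynomial time. The construction mirrors the classical Cook-Levin reduction: introduce propositional variables describing a $p(\lvert w\rvert) \times p(\lvert w\rvert)$ computation tableau whose cells record the tape contents, head position, and state at each time step. A conjunction of Boolean-valued clauses (using constants $0$ and $1$ only) enforces that a satisfying assignment encodes a syntactically valid accepting computation $\gamma$ of $\mathcal{M}$ on $w$: correct initial configuration, correct final accepting configuration, and local consistency of each window of adjacent cells across consecutive rows with respect to the transition relation $\Delta$. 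The novelty over the unweighted case lies in weaving the transition weights into the formula so that each satisfying assignment contributes not $1$ but $\sigma(\gamma)$.

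To inject the weights, I would attach to the window constraints a weighted factor: whenever the window-consistency clause for a given time step is witnessed by a particular transition $e \in \Delta$ being applied, the formula multiplies in the constant $\sigma(e) \in S$. Concretely, for each cell-window the consistency subformula is written as a disjunction over the finitely many legal local patterns; the pattern corresponding to applying transition $e$ is conjoined with the constant $\sigma(e)$, while patterns in which no transition fires (the head being elsewhere) are conjoined with the constant $1$. Because exactly one transition is applied per step along any genuine computation, the product over all time steps of these per-step factors telescopes to $\sigma(e_1)\cdots\sigma(e_t) = \sigma(\gamma)$ for the encoded computation $\gamma$, and the Boolean scaffolding guarantees that assignments not encoding a valid accepting computation evaluate to $0$. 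Distributivity of $S$ then ensures $\sum_{W} \overline{W}(\varphi_w) = \sum_{\gamma} \sigma(\gamma) = (r,w)$, the sum ranging over accepting computations of $\mathcal{M}$ on $w$.

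The main obstacle I anticipate is precisely this faithful encoding of the weights: one must arrange that each accepting computation corresponds to exactly one satisfying assignment (so that no spurious multiplicities appear), that the weight constants are placed so their product over a computation equals $\sigma(\gamma)$ rather than some overcounted or undercounted value, and that all the auxiliary bookkeeping variables forced by the tableau are pinned down uniquely by the Boolean constraints so they do not introduce extra summands over $W$. Handling the semiring-valued factors requires care because $S$ need not be commutative, so the transition weights must be multiplied in the correct temporal order along the single conjunctive spine of the formula; since $\land$ is interpreted by the semiring product and products respect the left-to-right order of conjuncts, I would lay out the per-step weight factors in increasing order of time step. Verifying that $\lvert\langle\varphi_w\rangle\rvert$ is polynomial and that it is produced in polynomial time is then straightforward, and transitivity of $\leq_m$ together with encoding of arbitrary input alphabets into $\{0,1\}$ completes the reduction.
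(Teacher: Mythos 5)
Your overall strategy coincides with the paper's: membership by guess-and-evaluate using distributivity (identical to the paper's argument via Proposition~\ref{prop:fpinclnp}), and hardness by a weighted Cook--Levin tableau in which transition weights are multiplied into the formula. However, as written, your weight-placement mechanism has a genuine flaw. You attach $\sigma(e)$ to the local \emph{window} patterns that witness the application of $e$. In a tableau row, the cell containing the state symbol is seen by \emph{several} overlapping windows (the windows centred at the head position and at its neighbours), so along a genuine computation the factor $\sigma(e)$ would be multiplied in more than once per time step, and your telescoping claim $\sigma(e_1)\cdots\sigma(e_t)=\sigma(\gamma)$ fails (over $\mathbb{N}$, for instance, one gets powers of the weights). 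You flag ``overcounted or undercounted value'' as an anticipated obstacle, but the mechanism you propose does not resolve it. The paper avoids the issue structurally: the weight is factored out \emph{once per time step}, not per window, by writing the step-$i$ formula as $\varphi_i := \bigvee_{e \in \Delta \cup \Delta'} \sigma(e) \land \bigwedge_{j=1}^{f(n)}\bigl(\psi_1(i,j) \lor \psi_2(i,j) \lor \psi_3(i,j,\ldots)\bigr)$, where the per-cell subformulae $\psi_1,\psi_2,\psi_3$ are all $\{0,1\}$-valued and the disjunction over transitions is unambiguous because consecutive configurations determine the transition uniquely.

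A second gap: you give no mechanism for computations shorter than the time bound $p(\lvert w\rvert)$. Since the paper's weighted machines have no transitions out of accepting states, an accepting computation of length $t < p(\lvert w\rvert)$ does not fill the tableau, so either it corresponds to no satisfying assignment, or--if the remaining rows are left unconstrained--to many, introducing spurious summands that corrupt the value $\sum_W \overline{W}(\varphi_w)$. The paper handles this by adjoining weight-$1$ pseudo-transitions $\Delta'$ (stay-put moves for configurations with no outgoing transition), which pad every accepting computation to exactly $f(n)$ steps without altering its weight; this is what makes the correspondence between accepting computations and nonzero assignments one-to-one. Relatedly, because $\lor$ is interpreted as $+$, every disjunction in the Boolean scaffolding must itself be engineered to have at most one nonzero disjunct (``unambiguity''), and garbage assignments must be killed by an annihilating $0$ factor placed \emph{before} the subformulae whose values are uncontrolled on such assignments; the paper fixes a linear order $\varphi_{\mathit{valid}} \prec \varphi_{\mathit{init}} \prec \varphi_1 \prec \cdots \prec \varphi_{\mathit{fin}}$ precisely for this purpose. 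Your proposal asserts that non-encoding assignments evaluate to $0$ but does not build in either of these safeguards.
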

\begin{proof}
It is not hard to see that $\mathsf{SAT}[S]$ is in $\mathbf{NP}[S]$. Upon an~input word $\langle\varphi\rangle$, the~weighted Turing machine for~$\mathsf{SAT}[S]$
first ``guesses'' the assignment $W \in \{0,1\}^{X_{\varphi}}$ and then evaluates~$\varphi$ according to $\overline{W}$, in~the~sense that the~sum of~values of~``computation suffixes'' started in~the~given configuration is $\overline{W}(\varphi)$. 
This evaluation is done similarly as in the proof of Proposition \ref{prop:fpinclnp}: 
it is trivial for~formulae $x$ and $\lnot x$ with $x \in X_{\varphi}$ -- the~machine accepts the~formula if and only if its truth value is~$1$, while this is done using transitions weighted by~$1$;
the~evaluation of~$a \in S$ can easily be done by evaluating its encoding $\tau_G(a)$, which takes the~form of~an~expression involving elements of~a~fixed finite generating set $G \subseteq S$;
the~evaluation of~$\varphi \lor \psi$ is done by~nondeterministically evaluating either $\varphi$, or~$\psi$; the~evaluation of~$\varphi \land \psi$ is done by first evaluating~$\varphi$ and~subsequently evaluating~$\psi$ (correctness follows by~distributivity of~$S$).\goodbreak    

The nontrivial part of the proof is to show that $\mathsf{SAT}[S]$ is $\mathbf{NP}[S]$-hard. 
The general idea is to observe that the~usual reduction of a~problem in~$\mathbf{NP}$ to~$\mathsf{SAT}$ can be done such that
even if the formula constructed there is interpreted over $S$, its value is still $0$ or~$1$, based on~its truth value over $\mathbb{B}$ -- in other words, it is \emph{unambiguous} in the sense of~\cite{droste2007a,droste2009c}.
Moreover, this formula can be assembled to~contain a~conjunction ``over all computation steps'' of a~machine $\mathcal{M}$ for~the~problem being reduced, while for~the~$i$-th step it is guaranteed that the $(i+1)$-th
configuration is obtained from the $i$-th using some transition $e$. To deal with the weighted case, it is enough to incorporate the weight $\sigma(e)$ of~such a~transition, so that the~value of~an~assignment corresponding to a~computation $\gamma$ is $\sigma(\gamma)$.

Let us now describe the reduction of a given problem $r \in \mathbf{NP}[S]$ to $\mathsf{SAT}[S]$ in more detail.
Suppose that $r \in S\llangle\Sigma^*\rrangle$ for~some alphabet~$\Sigma$. Then there obviously exists a~polynomial-time weighted Turing machine $\mathcal{M} = (Q,\Gamma,\Delta,\sigma,q_0,F,\square,\triangleright)$ 
over $S$ and~$\Sigma$ with \emph{semi-infinite} tape such that $\|\mathcal{M}\| = r$. The~definition of~such machines is analogous to the double-infinite case. A~difference
is that there is an end-marker $\triangleright \in \Gamma$ at~the~leftmost cell such that the head reading $\triangleright$ can write $\triangleright$ only, while it cannot move to the left;
the head cannot replace other symbols by $\triangleright$. Moreover, let us denote
\begin{displaymath}
\Delta' := \{(p,c,p,c,0) \mid p \in Q;~c \in \Gamma;~(p,c,q,d,s) \not\in \Delta \text{ for all $q \in Q$, $d \in \Gamma$, and $s \in \{-1,0,1\}$}\}.
\end{displaymath}
We write $C \rightarrow_e C'$, for configurations $C,C'$ of $\mathcal{M}$ and~$e = (p,c,p,c,0) \in \Delta'$, if the machine $\mathcal{M}$ is in~the~state~$p$ and~reads~$c$ in~its~configuration~$C$,
while at~the~same time $C' = C$. We write $C \hookrightarrow C'$ if and only if $C \rightarrow_e C'$ for some $e \in \Delta \cup \Delta'$. We also write $\sigma(e) = 1$ for all ``pseudo-transitions'' $e \in \Delta'$.

Without loss of generality, assume that $Q$ and $\Gamma$ are disjoint. Each configuration of~$\mathcal{M}$ is then uniquely determined by~a~word $C \in \Gamma^*Q\Gamma^*$ such that $C$
contains precisely one occurrence of $\triangleright$ that precedes all other symbols from $\Gamma$, and~no~occurrence of~$\square$. The unique occurrence of~a~symbol from~$Q$ represents the~state of~the~machine~$\mathcal{M}$, as well as the position of the machine's head 
(which reads either the~next symbol of~$C$, or~the~first blank cell in~case the~state is the~last symbol of~$C$). Moreover, let us assume that $F = \{q_{\mathit{acc}}\}$ for some single accepting state $q_{\mathit{acc}}$.

As $\mathcal{M}$ runs in~polynomial time, there exists a~function~$f\colon\mathbb{N} \to \mathbb{N}$, defined by~$f(n) = c n^k + d$ for~some $c,k,d \in \mathbb{N}$ and~all $n \in \mathbb{N}$,
such that $\mathsf{TIME}(\mathcal{M},w) \leq f(\lvert w\rvert)$ for~all $w \in \Sigma^*$ and~at~the~same time, every configuration $C$ of~$\mathcal{M}$ reachable upon an~input word~$w$ is of~length at~most $f(\lvert w\rvert)$.
This makes it possible to~``normalise'' the~length of~configurations of~$\mathcal{M}$ by~padding them with blank symbols from the~right -- in~this way, a~configuration $C$ becomes
$C\square^{f(\lvert w\rvert) - \lvert C\rvert}$. \emph{All configurations are understood to be padded like this in~what follows.} 

Let $n := \lvert w\rvert$. It is clear that there is a one-one correspondence between accepting computations in 
\begin{displaymath}
A_w(\mathcal{M}) := \{\gamma \in A(\mathcal{M}) \mid \lambda(\gamma) = w\}
\end{displaymath}
and words $C_0 C_1 \ldots C_{f(n)}$ such that $C_0,\ldots,C_{f(n)}$ are
(padded) configurations of $\mathcal{M}$ such that:
\begin{enumerate}
\item{The configuration $C_0$ is initial for input $w$, \emph{i.e.}, $C_0 = \triangleright q_0 w \square^{f(n) - n - 2}$.
}
\item{One has $C_{i-1} \hookrightarrow C_i$ for $i = 1,\ldots,f(n)$. This is clearly equivalent to saying that 
\begin{displaymath}
C_0 \rightarrow_{e_1} C_1 \rightarrow_{e_2} C_2 \rightarrow_{e_3} \ldots \rightarrow_{e_{f(n)}} C_{f(n)},
\end{displaymath}
where $e_1,\ldots,e_m \in \Delta$ and $e_{m+1} = \ldots = e_{f(n)} \in \Delta'$ for some $m \in \mathbb{N}$. 
}
\item{The configuration $C_{f(n)}$ is accepting, \emph{i.e.}, $C_{f(n)}$ contains $q_{\mathit{acc}}$. 
}
\end{enumerate} 

We now describe a deterministic polynomial-time construction of~a~weighted propositional formula~$\varphi$ for~each $w \in \Sigma^*$ such that \emph{some} of~the (restricted) truth assignments $W\colon X_{\varphi} \to \{0,1\}$ correspond to a~word from $(\Gamma \cup Q)^{(f(n)+1)f(n)}$ representing
an~accepting computation of~$\mathcal{M}$ on~$w$ -- that is, to a~word $C_0 C_1 \ldots C_{f(n)}$ satisfying the three conditions above. In~case this happens for~$\gamma \in A_w(\mathcal{M})$, then $\overline{W}(\varphi) = \sigma(\gamma)$.
If $W$ does not correspond to a~word representing a~computation from $A_w(\mathcal{M})$ (or to a word from $(\Gamma \cup Q)^{(f(n)+1)f(n)}$ at all), then $\overline{W}(\varphi) = 0$. 

The construction of $\varphi$ is done similarly as for nondeterministic Turing machines without weights; however, some additional care is needed in~the~weighted setting.
\emph{The~construction described below finishes the~proof of~the~theorem.} 

The~formula $\varphi$ constructed in~what follows contains variables $x_{i,j,c}$ for~\mbox{$i = 0,\ldots,f(n)$}, \mbox{$j = 1,\ldots,f(n)$}, and~$c \in \Gamma \cup Q$.
The~intuitive meaning of~a~variable $x_{i,j,c}$ is that $W(x_{i,j,c}) = 1$ if and only if the $(if(n) + j)$-th symbol of~the~word in~$(\Gamma \cup Q)^{(f(n)+1)f(n)}$ corresponding to~the assignment~$W$ is~$c$.
In~case the~corresponding word represents a~computation of~$\mathcal{M}$ in~the~sense explained above, then this is precisely the~$j$-th symbol of~the~configuration~$C_i$. 
Of~course, assignments $W\colon X_{\varphi} \to \{0,1\}$ such that $W(x_{i,j,c}) = W(x_{i,j,d}) = 1$ for~some $i,j$ and~two distinct $c,d \in \Gamma \cup Q$ do not correspond to any~word in~$(\Gamma \cup Q)^*$.
Our~construction of~$\varphi$ guarantees that $\overline{W}(\varphi) = 0$ whenever this happens.

At the highest level, we take 
\begin{displaymath}
\varphi := \varphi_{\mathit{valid}} \land \varphi_{\mathit{init}} \land \varphi_1 \land \ldots \land \varphi_{f(n)} \land \varphi_{\mathit{fin}},
\end{displaymath}
where:
\begin{enumerate}
\setcounter{enumi}{-1}
\item{The formula $\varphi_{\mathit{valid}}$ satisfies $\overline{W}(\varphi_{\mathit{valid}}) = 1$ if and only if $W(x_{i,j,c}) = 1$ for \emph{precisely one} $c \in \Gamma \cup Q$
for~\mbox{$i = 0,\ldots,f(n)$} and~$j = 1,\ldots,f(n)$ -- or, in~other words, if and only if $W$ corresponds to a~word in~$(\Gamma \cup Q)^{(f(n)+1)f(n)}$. Otherwise $\overline{W}(\varphi_{\mathit{valid}}) = 0$.  
}
\item{Provided $\overline{W}(\varphi_{\mathit{valid}}) = 1$, the formula $\varphi_{\mathit{init}}$ satisfies $\overline{W}(\varphi_{\mathit{init}}) = 1$ if and only if the~prefix of~length~$f(n)$ of~the~word corresponding to~$W$ is precisely the~initial configuration $C_0 = \triangleright q_0 w \square^{f(n) - n - 2}$;
otherwise $\overline{W}(\varphi_{\mathit{init}}) = 0$.   
}
\item{Let $i \in \{1,\ldots,f(n)\}$. Then, provided $\overline{W}(\varphi_{\mathit{valid}}) = 1$ and~if the~word corresponding to~$W$ starts with~$i$ valid (padded) configurations $C_0,\ldots,C_{i-1}$,
the~formula $\varphi_i$ satisfies $\overline{W}(\varphi_i) = \sigma(e)$ whenever the~next~$f(n)$ symbols of~the~corresponding word form a~configuration $C_i$ such that $C_{i-1} \rightarrow_e C_i$ for~$e \in \Delta \cup \Delta'$;
otherwise $\overline{W}(\varphi_i) = 0$.       
}
\item{Provided $\overline{W}(\varphi_{\mathit{valid}}) = 1$ and in case the~word corresponding to~$W$ consists of $f(n)+1$ valid configurations $C_0,\ldots,C_{f(n)}$, the~formula $\varphi_{\mathit{fin}}$ satisfies $\overline{W}(\varphi_{\mathit{fin}}) = 1$ if and only if $C_{f(n)}$ is accepting (\emph{i.e.}, if it contains $q_{\mathit{acc}}$);
otherwise $\overline{W}(\varphi_{\mathit{fin}}) = 0$.  
}
\end{enumerate}

It is clear that if $\varphi$ is constructed for $w$ in this way, then $(\|\mathcal{M}\|,w) = (\mathsf{SAT}[S],\langle\varphi\rangle)$.
It thus suffices to construct the~formulae $\varphi_{\mathit{valid}},\varphi_{\mathit{init}},\varphi_{1},\ldots,\varphi_{f(n)},\varphi_{\mathit{fin}}$ with the properties above.
We describe these constructions using the ``big'' operators
\begin{displaymath}
\bigvee \qquad \text{ and } \qquad \bigwedge
\end{displaymath}
for convenience. However, note that the use of ``big'' conjunction is problematic, as semi\-ring multiplication might not be commutative. 
We nevertheless use this operator just in the following two contexts:
\begin{enumerate}
\item{We may use it freely when its operands evaluate to $0$ or $1$ under all truth assignments -- as $0$ and $1$ commute, the use of the problematic operator is justified in this case.
}
\item{We also use this operator in some cases when its operands are not guaranteed to evaluate to $0$ or $1$. More precisely, let us fix the~following linear order on~the~subformulae to be constructed:
\begin{displaymath}
\varphi_{\mathit{valid}} \prec \varphi_{\mathit{init}} \prec \varphi_1 \prec \ldots \prec \varphi_{f(n)} \prec \varphi_{\mathit{fin}}.
\end{displaymath}
If $\psi$ denotes any of these subformulae, then we may use the ``big'' conjunction operator in~the~construction of~$\psi$
in~case its operands evaluate to~$0$~or~$1$ under all truth assignments such that subformulae $\psi'$ with $\psi' \prec \psi$ evaluate to a~nonzero value. This means
that the~formula~$\varphi$ is guaranteed to~evaluate to~$0$ whenever some of~the~operands of~a~``big'' conjunction in~$\psi$ evaluate to a~value other than~$0$ or~$1$.       
}  
\end{enumerate}

Clearly, $\varphi_{\mathit{valid}}$ can be constructed as 
\begin{displaymath}
\varphi_{\mathit{valid}} := \bigwedge_{i=0}^{f(n)} \bigwedge_{j = 1}^{f(n)} \bigvee_{c \in \Gamma \cup Q}\left(x_{i,j,c} \land \bigwedge_{\substack{d \in \Gamma \cup Q \\ d \neq c}} \lnot x_{i,j,d}\right).  
\end{displaymath} 
As the content of the parentheses can evaluate to $1$ for at most one $c$ for any given $i$ and $j$, while for all other~$c$ it evaluates to~$0$, the~formula $\varphi_{\mathit{valid}}$ is well-defined and always evaluates either to $0$, or to $1$, according to what has been claimed above. 

In case $w = c_1 \ldots c_n$ for $c_1,\ldots,c_n \in \Sigma$, let $\varphi_{\mathit{init}}$ be defined by
\begin{displaymath}
\varphi_{\mathit{init}} := x_{0,1,\triangleright} \land x_{0,2,q_0} \land x_{0,3,c_1} \land x_{0,4,c_2} \land \ldots \land x_{0,n+2,c_n} \land x_{0,n+3,\square} \land \ldots \land x_{0,f(n),\square}. 
\end{displaymath}
It is clear that $\varphi_{\mathit{init}}$ has the desired properties. 

We now describe the construction of the formula $\varphi_i$ for $i = 1,\ldots,f(n)$. Let
\begin{displaymath}
\varphi_i := \bigvee_{\substack{e \in \Delta \cup \Delta' \\ e = (p,c,q,d,s)}} \sigma(e) \land \bigwedge_{j = 1}^{f(n)} \left(\psi_1(i,j) \lor \psi_2(i,j) \lor \psi_3(i,j,p,c,q,d,s)\right). 
\end{displaymath} 
Here, $\psi_1(i,j)$, $\psi_2(i,j)$, and $\psi_3(i,j,p,c,q,d,s)$ are subformulae to be specified below, satisfying the~following properties in~case $\overline{W}(\varphi_{\mathit{valid}}) = 1$ and~in~case the~word corresponding to~$W$ starts with $i$ valid configurations $C_0,\ldots,C_{i-1}$:
\begin{enumerate}
\item{The formula $\psi_1(i,j)$ satisfies $\overline{W}(\psi_1(i,j)) = 1$ when the~$j$-th symbol of~the~configuration~$C_{i-1}$ -- which we may denote by~$c$ -- cannot be altered in~the~following computation step
(\emph{i.e.}, $c \not\in Q$ and~the~same holds for~both neighbouring symbols of~$c$ in~$C_{i-1}$, in~case they exist) and, at~the~same time, the~$j$-th symbol following the~end of~$C_{i-1}$ is $c$ as well;
otherwise $\overline{W}(\psi_1(i,j)) = 0$. The~symbol~$c$ is thus ``copied to the next configuration''.  
}
\item{The formula $\psi_2(i,j)$ satisfies $\overline{W}(\psi_2(i,j)) = 1$ when the~$j$-th symbol of~the~configuration~$C_{i-1}$ is not in~$Q$, but either the $(j-1)$-th, or the~$(j+1)$-th
symbol of~$C_{i-1}$ is in~$Q$; otherwise $\overline{W}(\psi_2(i,j)) = 0$. This formula just ``skips'' symbols in~$C_{i-1}$ adjacent to~the~symbol representing the~machine's head, as their
correct replacement in~the~following configuration is taken care of by~the~formula $\psi_3(i,j,p,c,q,d,s)$.    
}
\item{The formula $\psi_3(i,j,p,c,q,d,s)$ satisfies $\overline{W}(\psi_3(i,j,p,c,q,d,s)) = 1$ if the $j$-th symbol of~the~configuration $C_{i-1}$ is in~$Q$ and~if the~$(j-1)$-th (if $j - 1 > 0$), the~$j$-th, and~the~$(j+1)$-th symbol
following the~end of~$C_{i-1}$ represent the~symbols obtained from the~respective symbols of~$C_{i-1}$ using the~transition $(p,c,q,d,s)$; otherwise $\overline{W}(\psi_3(i,j,p,c,q,d,s)) = 0$. Note that we may assume $j < f(n)$ here,
as~otherwise the~next ``unpadded'' configuration would be longer than $f(n)$, contradicting the~choice~of~$f$.       
} 
\end{enumerate}   
It is clear that no more than one of the formulae $\psi_1(i,j)$, $\psi_2(i,j)$, $\psi_3(i,j,p,c,q,d,s)$ can evaluate to~$1$ for~given $i,j,p,c,q,d,s$ in~case the~above properties are satisfied. Hence, once the~constructions for~these formulae are described,
the~formula $\varphi_i$ is well-defined and~has the~desired properties.   
   
The formula $\psi_1(i,j)$ can be constructed as follows:
\begin{displaymath}
\psi_1(i,j) := \left(\bigwedge_{q \in Q} \lnot x_{i-1,j-1,q}\right) \land \left(\bigwedge_{q \in Q} \lnot x_{i-1,j,q}\right) \land \left(\bigwedge_{q \in Q} \lnot x_{i-1,j+1,q}\right) \land \left(\bigvee_{c \in \Gamma} x_{i-1,j,c} \land x_{i,j,c}\right) 
\end{displaymath}   
if $0 < j < f(n)$,
\begin{displaymath}
\psi_1(i,j) := \left(\bigwedge_{q \in Q} \lnot x_{i-1,j,q}\right) \land \left(\bigwedge_{q \in Q} \lnot x_{i-1,j+1,q}\right) \land \left(\bigvee_{c \in \Gamma} x_{i-1,j,c} \land x_{i,j,c}\right) 
\end{displaymath}
if $j = 0$, and\goodbreak
\begin{displaymath}
\psi_1(i,j) := \left(\bigwedge_{q \in Q} \lnot x_{i-1,j-1,q}\right) \land \left(\bigwedge_{q \in Q} \lnot x_{i-1,j,q}\right) \land \left(\bigvee_{c \in \Gamma} x_{i-1,j,c} \land x_{i,j,c}\right) 
\end{displaymath}
if $j = f(n)$.

Similarly, the formula $\psi_2(i,j)$ can be constructed as follows:
\begin{displaymath}
\psi_2(i,j) := \left(\bigwedge_{q \in Q} \lnot x_{i-1,j,q}\right) \land \left(\bigvee_{q \in Q} x_{i-1,j-1,q} \lor x_{i-1,j+1,q}\right)
\end{displaymath}  
if $0 < j < f(n)$,
\begin{displaymath}
\psi_2(i,j) := \left(\bigwedge_{q \in Q} \lnot x_{i-1,j,q}\right) \land \left(\bigvee_{q \in Q} x_{i-1,j+1,q}\right)
\end{displaymath}
if $j = 0$, and
\begin{displaymath}
\psi_2(i,j) := \left(\bigwedge_{q \in Q} \lnot x_{i-1,j,q}\right) \land \left(\bigvee_{q \in Q} x_{i-1,j-1,q}\right)
\end{displaymath}
if $j = f(n)$.

Finally, let us construct $\psi_3(i,j,p,c,q,d,s)$. For $0 < j < f(n)$, this should ``rewrite'' a~subword $c'pc$ -- where $c' \in \Gamma$ and $p$ is at~the~$j$-th position of~$C_{i-1}$ -- to some other subword $\alpha(q,s,c')\beta(q,d,s,c')\gamma(q,d,s)$, where
\begin{align*}
\alpha(q,s,c') & := \left\{\begin{array}{ll}q & \text{ if $s = -1$}, \\ c' & \text{ otherwise},\end{array}\right. \\ 
\beta(q,d,s,c') & := \left\{\begin{array}{ll}c' & \text{ if $s = -1$}, \\ q & \text{ if $s = 0$}, \\ d & \text{ if $s = 1$}, \end{array}\right.
\end{align*}
and
\begin{displaymath}
\gamma(q,d,s) := \left\{\begin{array}{ll}q & \text{ if $s = 1$}, \\ d & \text{ otherwise}.\end{array}\right.
\end{displaymath}
The formula $\psi_3(i,j,p,c,q,d,s)$ can thus be constructed as
\begin{displaymath}
\psi_3(i,j,p,c,q,d,s) := \bigvee_{c' \in \Gamma} x_{i-1,j-1,c'} \land x_{i-1,j,p} \land x_{i-1,j+1,c} \land x_{i,j-1,\alpha(q,s,c')} \land x_{i,j,\beta(q,d,s,c')} \land x_{i,j+1,\gamma(q,d,s)} 
\end{displaymath}
in case $0 < j < f(n)$. If $j = 0$, then $s$ cannot be $-1$ and~$\psi_3(i,j,p,c,q,d,s)$ can be constructed as 
\begin{displaymath}
\psi_3(i,j,p,c,q,d,s) := x_{i-1,j,p} \land x_{i-1,j+1,c} \land x_{i,j,\beta(q,d,s,c')} \land x_{i,j+1,\gamma(q,d,s)} 
\end{displaymath}
for any $c' \in \Gamma$. If $j = f(n)$, then we can set, according to what has been observed above,
\begin{displaymath}
\psi_3(i,j,p,c,q,d,s) := 0.
\end{displaymath} 

To complete the proof, it remains to construct the formula $\varphi_{\mathit{fin}}$. However, this can clearly be done as~follows:
\begin{displaymath}
\varphi_{\mathit{fin}} := x_{f(n),1,q_{\mathit{acc}}} \lor x_{f(n),2,q_{\mathit{acc}}} \lor \ldots \lor x_{f(n),f(n),q_{\mathit{acc}}}. 
\end{displaymath}
This finishes the construction.
\end{proof}

Let us finally touch on a possibility of studying $\mathbf{NP}[S]$-completeness with respect to different types of~reductions.
The definitions of the following two reductions -- in our terminology, the \emph{polynomial-time Turing reduction} and the \emph{polynomial-time one-call reduction} --
are in fact based on reductions most typically used to define $\#\mathbf{P}$-completeness \cite{valiant1979a,valiant1979b,arora2009a,papadimitriou1994a}. These reductions are clearly stronger than the~polynomial-time many-one reduction;
this means that the~problems proved to be $\mathbf{NP}[S]$-complete above with respect to $\leq_m$ stay $\mathbf{NP}[S]$-complete under the~reductions introduced below as well.    

Following the usual approach, we define Turing reductions by means of oracle machines.
One usually considers $\mathbf{FP}$-machines with access to~an~oracle $f\colon\Sigma^* \to \mathbb{N}$ for~counting problems, while outputs of~the~oracle are given in~binary.
In~view of~our generalisation of~the~complexity class $\mathbf{FP}$ to $\mathbf{FP}[S]$, it~seems natural to~consider $\mathbf{FP}[S]$-machines with an oracle $f\colon\Sigma^* \to \langle G\rangle$ for some finite $G \subseteq S$,
outputs of~the~oracle being represented as~terms from~$T(G)$.

However, unlike for binary representations of numbers, it is problematic to canonically choose a single term $t \in T(G)$ such that $h_G[S](t) = a$ for given $a \in \langle G\rangle$.
On~the~other hand, allowing the oracle to output any such term is of little use for the machine that accesses it -- for instance, it could start by an exponentially long sum of zeros.
For~these reasons, we define a~\mbox{$G$-oracle} to be a~pair $\mathcal{O} = (f,L)$, where $f\colon\Sigma^* \to \langle G\rangle$ is a~function and~$L \in \mathbf{P}$ is a~``filter'' language;
a~machine with access to~$\mathcal{O}$ can ``ask'' for an~output of~$f$ upon $w \in \Sigma^*$, for which the~oracle ``returns'' a~term $t(w) \in T(G) \cap L$ such that $h_G[S](t(w)) = f(w)$, 
or reaches some special state in~case there is no such $t(w)$.            

\begin{definition}
Let $S$ be a semiring and $\Sigma_1,\Sigma_2$ alphabets. A series $r \in S\llangle\Sigma_1^*\rrangle$ (or~a~function $r\colon\Sigma_1^* \to S$) is \emph{polynomially Turing-reducible} to $s \in S\llangle\Sigma_2^*\rrangle$,
written $r \leq_T s$, if there is a finite set $G \subseteq S$ such that $r \in \langle G\rangle\llangle\Sigma_1^*\rrangle$ and $s \in \langle G\rangle\llangle\Sigma_2^*\rrangle$, and~a~polynomial-time deterministic Turing machine 
with access to~some $G$-oracle $\mathcal{O} = (s,L)$ that transforms each $w \in \Sigma_1^*$ to some term $t(w) \in T(G)$ such that $h_G[S](t(w)) = (r,w)$.     
\end{definition}

By~a~\emph{polynomial-time one-call reduction}, we understand a~polynomial-time Turing reduction, in which the machine can access the oracle at most once. 

As the reductions just introduced are stronger than $\leq_m$ for problems in $\mathbf{NP}[S]$, problems $\mathbf{NP}[S]$-complete with respect to $\leq_m$ remain $\mathbf{NP}[S]$-complete with respect to them as well.
On~the~other hand, the argument ruling out the~existence of~$\mathbf{NP}[S]$-complete problems for semirings that are not finitely generated can no~longer be applied here.  
A~more detailed study of $\mathbf{NP}[S]$-completeness with respect to Turing reductions is left for~future research. 

\section{Hardness of Problems over Different Semirings}

Given semirings $S,S'$, a~semiring homomorphism $\alpha\colon S \to S'$, and a~formal power series $r \in S\llangle\Sigma^*\rrangle$ over some alphabet $\Sigma$,
we denote by $\alpha(r)$ the series 
\begin{displaymath}
\alpha(r) := \sum_{w \in \Sigma^*} \alpha(r,w) w.
\end{displaymath} 
We now prove that the~image of~an~$\mathbf{NP}[S]$-complete series under a~\emph{surjective} homomorphism from~$S$ onto~$S'$ is~$\mathbf{NP}[S']$-complete.

\begin{proposition}
Let $S,S'$ be semirings, $\alpha\colon S \to S'$ a surjective homomorphism, and~$s \in S\llangle\Sigma^*\rrangle$ a~series. If~$s$ is~$\mathbf{NP}[S]$-complete with respect to~$\leq_m$, then
$\alpha(s)$ is $\mathbf{NP}[S']$-complete with respect to~$\leq_m$.
\end{proposition}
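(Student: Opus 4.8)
The plan is to establish membership $\alpha(s) \in \mathbf{NP}[S']$ and $\mathbf{NP}[S']$-hardness of $\alpha(s)$ separately, in both cases exploiting that a surjective semiring homomorphism interacts cleanly with the semantics of weighted Turing machines.

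For membership, I would start from a polynomial-time weighted Turing machine $\mathcal{M} = (Q,\Gamma,\Delta,\sigma,q_0,F,\square)$ over $S$ with $\|\mathcal{M}\| = s$, which exists because $s \in \mathbf{NP}[S]$, and push it through $\alpha$. Concretely, I would keep the same states, tape alphabet, and head movements, discard every transition $e$ with $\alpha(\sigma(e)) = 0$ (legitimate, since the definition of a weighted Turing machine forbids the weight $0$), and reweight each surviving transition by $\alpha(\sigma(e)) \in S' \setminus \{0\}$, obtaining a machine $\mathcal{M}'$ over $S'$. The key point is that $\alpha$ being a homomorphism yields $\sigma'(\gamma) = \alpha(\sigma(\gamma))$ for every computation $\gamma$, while the discarded transitions only remove computations whose value in $S'$ would already have been $0$ (a single factor $\alpha(\sigma(e)) = 0$ annihilates the whole product); the structural halting property and the polynomial time bound are inherited verbatim from $\mathcal{M}$, as the computations of $\mathcal{M}'$ form a subset of those of $\mathcal{M}$. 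Since each $A_w(\mathcal{M})$ is finite, additivity of $\alpha$ lets me pull it out of the defining sum, giving $(\|\mathcal{M}'\|,w) = \alpha\bigl(\sum_{\gamma \in A_w(\mathcal{M})} \sigma(\gamma)\bigr) = \alpha\bigl((s,w)\bigr)$ for all $w$, that is, $\|\mathcal{M}'\| = \alpha(s)$.

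For hardness, I would take an arbitrary $r' \in \mathbf{NP}[S']$ and first \emph{lift} it to some $r \in \mathbf{NP}[S]$ with $\alpha(r) = r'$. Starting from a polynomial-time machine $\mathcal{M}'$ over $S'$ realising $r'$, surjectivity of $\alpha$ lets me choose, for each transition weight $a' \in S' \setminus \{0\}$, a preimage $a \in S$ with $\alpha(a) = a'$; any such preimage is automatically nonzero because $\alpha(0) = 0$, so it is a legal weight over $S$. Replacing weights in this way produces a machine $\mathcal{M}$ over $S$ of the same structure, hence halting and polynomial-time, and the same homomorphism computation as above (now with no discarded transitions) gives $\alpha(\|\mathcal{M}\|) = \|\mathcal{M}'\| = r'$; thus $r := \|\mathcal{M}\|$ lies in $\mathbf{NP}[S]$ and satisfies $\alpha(r) = r'$. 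Because $s$ is $\mathbf{NP}[S]$-complete, there is a polynomial-time computable $f$ witnessing $r \leq_m s$, so that $(s,f(w)) = (r,w)$ for all $w$. Applying $\alpha$ coefficientwise turns this into $(\alpha(s),f(w)) = (\alpha(r),w) = (r',w)$, so the very same $f$ witnesses $r' \leq_m \alpha(s)$.

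The individual steps are light, and the genuine care is concentrated in the membership direction: one must verify that deleting transitions whose image is $0$ changes neither the behaviour nor the well-formedness of the machine, and that $\alpha$ may be interchanged with the (locally finite, hence finite per input) sums defining the behaviour. The hardness direction then reduces to the clean observation that applying $\alpha$ to all coefficients leaves a $\leq_m$-reduction intact, so the only substantive ingredient there is the surjectivity-based lifting of an $\mathbf{NP}[S']$-machine back to an $\mathbf{NP}[S]$-machine.
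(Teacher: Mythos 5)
Your proof is correct and takes essentially the same route as the paper's: membership of $\alpha(s)$ in $\mathbf{NP}[S']$ by transporting the machine for $s$ along $\alpha$, and hardness by using surjectivity to lift an arbitrary $r' \in \mathbf{NP}[S']$ to some $r \in \mathbf{NP}[S]$ with $\alpha(r) = r'$, invoking $r \leq_m s$, and applying $\alpha$ coefficientwise so that the same reduction function witnesses $r' \leq_m \alpha(s)$. The only difference is one of detail: the paper declares membership ``clear'' and the lifting immediate from surjectivity, whereas you carefully verify the machine-level subtleties (discarding transitions whose weight maps to $0$, and choosing nonzero preimages of weights), which the paper leaves implicit.
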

\begin{proof}
It is clear that $\alpha(s) \in \mathbf{NP}[S']$ whenever $s \in \mathbf{NP}[S]$. Let $s$ be $\mathbf{NP}[S]$-hard, and~let~us prove that each $r \in \mathbf{NP}[S']$ reduces to~$\alpha(s)$.
Let $r$ be given. As $\alpha$ is surjective, there is $r' \in \mathbf{NP}[S]$ such that $\alpha(r') = r$. Then $r' \leq_m s$. As~a~result, $r = \alpha(r') \leq_m \alpha(s)$.  
\end{proof}

By virtue of~the~first isomorphism theorem, the~proposition established above actually says that if~$s$ is~$\mathbf{NP}[S]$-complete with respect to~$\leq_m$, then its ``natural images''
are $\mathbf{NP}[S']$-complete over the~factor semi\-rings~$S'$ of~$S$. Note that this is not true for stronger reductions -- already for~$S = \mathbb{N}$ and~its factor semiring~$\mathbb{B}$,
the~problem of~bipartite matchings is a~notorious counterexample~\cite{valiant1979a}. 

The following proposition states a~similar property: weighted computation over a~semi\-ring~$S$ is always in some sense ``harder'' than over its factor semirings.   

\begin{proposition}
\label{prop:division}
Let $S$ be a semiring such that $\mathbf{FP}[S] = \mathbf{NP}[S]$. Then $\mathbf{FP}[T] = \mathbf{NP}[T]$ for~all factor semirings~$T$ of~$S$. 
\end{proposition}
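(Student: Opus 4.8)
The plan is to use a ``lift--solve--project'' strategy along the surjective homomorphism witnessing that $T$ is a factor of $S$. Since $\mathbf{FP}[T] \subseteq \mathbf{NP}[T]$ always holds by Proposition~\ref{prop:fpinclnp}, it suffices to prove the reverse inclusion $\mathbf{NP}[T] \subseteq \mathbf{FP}[T]$. By the characterisation of factor semirings recalled in the preliminaries, $T$ being a factor semiring of $S$ means that there is a surjective semiring homomorphism $\alpha\colon S \to T$. First I would take an arbitrary $r \in \mathbf{NP}[T]$ over some alphabet $\Sigma$, lift it to a series $\tilde{r} \in \mathbf{NP}[S]$ with $\alpha(\tilde{r}) = r$, apply the hypothesis $\mathbf{FP}[S] = \mathbf{NP}[S]$ to conclude $\tilde{r} \in \mathbf{FP}[S]$, and finally push the resulting $\mathbf{FP}[S]$-witness back down through $\alpha$ to obtain an $\mathbf{FP}[T]$-witness for $r$.

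For the lifting step, let $\mathcal{M}$ be a polynomial-time weighted Turing machine over $T$ with $\|\mathcal{M}\| = r$. I would build $\tilde{\mathcal{M}}$ over $S$ by keeping the entire transition structure of $\mathcal{M}$ and replacing each transition weight $t \in T \setminus \{0\}$ by a fixed $\alpha$-preimage $s \in S$ with $\alpha(s) = t$; such a preimage exists by surjectivity and is automatically nonzero, since $\alpha(0) = 0$ forces any preimage of a nonzero element to be nonzero. As $\tilde{\mathcal{M}}$ has exactly the same configurations and transition relation as $\mathcal{M}$, it is halting and runs within the same time bound, so $\tilde{r} := \|\tilde{\mathcal{M}}\| \in \mathbf{NP}[S]$. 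Because $\alpha$ is a homomorphism, it commutes with the products and the locally finite sums defining the behaviour: for each accepting computation $\gamma$ one has $\alpha(\tilde{\sigma}(\gamma)) = \sigma(\gamma)$, and summing over $A_w(\mathcal{M})$ coefficient-wise yields $\alpha(\tilde{r}) = \|\mathcal{M}\| = r$.

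For the projection step, the hypothesis supplies a finite $G \subseteq S$ and a polynomial-time deterministic machine outputting, on input $w$, a term $t(w) \in T(G)$ with $h_G[S](t(w)) = (\tilde{r},w)$. I would set $H := \alpha(G)$, a finite subset of $T$, and let $\hat{\alpha}\colon T(G) \to T(H)$ be the term-algebra homomorphism induced by $g \mapsto \alpha(g)$ (a fixed finite substitution, hence computable in linear time). The key identity is $h_H[T] \circ \hat{\alpha} = \alpha \circ h_G[S]$: both sides are homomorphisms $T(G) \to T$ agreeing on $G \cup \{0,1\}$, so they coincide by the universal property of the term algebra. Consequently $h_H[T](\hat{\alpha}(t(w))) = \alpha((\tilde{r},w)) = (r,w)$, and composing the machine for $\tilde{r}$ with the substitution $\hat{\alpha}$ yields a polynomial-time machine witnessing $r \in \mathbf{FP}_H[T,\Sigma] \subseteq \mathbf{FP}[T]$. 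This gives $\mathbf{NP}[T] \subseteq \mathbf{FP}[T]$ and hence the claimed equality.

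I expect the genuinely delicate points to be verification rather than conceptual difficulty: checking that the lifted machine $\tilde{\mathcal{M}}$ stays halting with an unchanged time bound (immediate, since both halting and timing depend only on the transition structure, not on the weights), and checking the compatibility square $h_H[T] \circ \hat{\alpha} = \alpha \circ h_G[S]$, which reduces to agreement on generators. Everything else -- surjectivity supplying nonzero preimages, $\alpha$ commuting with the behaviour series, and the substitution $\hat\alpha$ being polynomial-time -- is routine bookkeeping.
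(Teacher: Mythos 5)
Your proposal is correct and follows essentially the same route as the paper's proof: lift $r \in \mathbf{NP}[T]$ through the surjective homomorphism $\alpha$ to a series in $\mathbf{NP}[S]$, apply the hypothesis $\mathbf{FP}[S] = \mathbf{NP}[S]$, and then push the term output back down by replacing each generator $g \in G$ with $\alpha(g)$. The only difference is that you spell out details the paper leaves implicit -- the explicit weight-replacement construction justifying the lift, and the compatibility identity $h_{\alpha(G)}[T] \circ \hat{\alpha} = \alpha \circ h_G[S]$ -- both of which are correct.
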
 
\begin{proof}
When $T$ is a factor semiring of $S$, a surjective homomorphism $\alpha\colon S \to T$ has to exist.
Suppose $\mathbf{FP}[S] = \mathbf{NP}[S]$. Let $r \in T\llangle\Sigma^*\rrangle$ be a formal power series in $\mathbf{NP}[T]$. 
By surjectivity of $\alpha$, there is a~problem $r' \in \mathbf{NP}[S]$ such that $\alpha(r') = r$.
Let $\mathcal{M}$ be an~$\mathbf{FP}[S]$-machine for $r'$. Given $w \in \Sigma^*$, the machine $\mathcal{M}$ outputs a~term $t(w) \in T(G)$, for some finite $G \subseteq S$, such
that $h_G[S](t(w)) = (r',w)$. The $\mathbf{FP}[T]$-machine for~$r$ can~then just simulate $\mathcal{M}$ and finally replace each $g \in G$ by $\alpha(g)$ in $t(w)$; 
let us denote the~term thus obtained by~$\alpha(t(w))$. Clearly $h_{\alpha(G)}[T](\alpha(t(w))) = \alpha(r',w) = (r,w)$.          
\end{proof}

\section{$\mathbf{FP}[S]$~vs.~$\mathbf{NP}[S]$}
\label{sec:fpnp}
 
Recall from Proposition \ref{prop:fpinclnp} that $\mathbf{FP}[S] \subseteq \mathbf{NP}[S]$ for all semirings $S$, generalising both the~inclusion \mbox{$\mathbf{P} \subseteq \mathbf{NP}$} and~the~inclusion $\mathbf{FP} \subseteq \#\mathbf{P}$.
We now give an~example of~$S$, over which provably $\mathbf{FP}[S] \neq \mathbf{NP}[S]$.  

\begin{theorem}
Let $\Sigma = \{a,b,\#\}$. Then $\mathbf{FP}[2_{\mathit{fin}}^{\Sigma^*}] \subsetneq \mathbf{NP}[2_{\mathit{fin}}^{\Sigma^*}]$.
\end{theorem}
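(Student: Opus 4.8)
The plan is to exploit the inclusion $\mathbf{FP}[S]\subseteq\mathbf{NP}[S]$ from Proposition~\ref{prop:fpinclnp} and to exhibit a single series lying in the larger class but not in the smaller one. My candidate is the ``diagonal'' series $r\in 2_{\mathit{fin}}^{\Sigma^*}\llangle\{a\}^*\rrangle$ defined by $(r,a^n)=\{u\#u\mid u\in\{a,b\}^n\}$ for all $n\in\mathbb{N}$. By the transducer characterisation of Example~\ref{ex:transducers}, $r$ belongs to $\mathbf{NP}[2_{\mathit{fin}}^{\Sigma^*}]$: a nondeterministic polynomial-time transducer on input $a^n$ guesses $u\in\{a,b\}^n$ symbol by symbol and prints $u\#u$, so that its set of outputs on $a^n$ is exactly $(r,a^n)$. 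It then remains to show $r\notin\mathbf{FP}[2_{\mathit{fin}}^{\Sigma^*}]$.

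Suppose, for contradiction, that $r\in\mathbf{FP}_G[2_{\mathit{fin}}^{\Sigma^*},\{a\}]$ for some finite $G\subseteq 2_{\mathit{fin}}^{\Sigma^*}$. Then a deterministic polynomial-time machine produces, on input $a^n$, a term $t_n\in T(G)$ of size at most $p(n)$ for some fixed polynomial $p$, with $h_G[2_{\mathit{fin}}^{\Sigma^*}](t_n)=L_n$, where $L_n:=\{u\#u\mid u\in\{a,b\}^n\}$. The key observation is that, since addition and multiplication in $2_{\mathit{fin}}^{\Sigma^*}$ are precisely union and concatenation of languages, the term $t_n$ is nothing but a regular expression for the finite language $L_n$ built from union and concatenation only, of size $O(p(n))$: each generator in $G$ is a fixed finite language contributing $O(1)$ symbols, and if convenient one may first rewrite every generator as a $\cup$-$\cdot$-combination of the single-letter languages $\{a\},\{b\},\{\#\}$ together with $\{\eps\}$, with only constant blow-up. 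Thompson's construction then turns $t_n$ into a nondeterministic finite automaton $A_n$ recognising $L_n$ with only $O(p(n))$ states.

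The crux is to contradict this by a state lower bound: any NFA for $L_n$ needs at least $2^n$ states. This follows from a standard fooling-set argument. The family $\{(u\#,\,u)\mid u\in\{a,b\}^n\}$ has size $2^n$; it satisfies $(u\#)\cdot u=u\#u\in L_n$ for every $u$, yet $(u\#)\cdot u'=u\#u'\notin L_n$ whenever $u\neq u'$, so no two of these pairs can be routed through a common state, forcing at least $2^n$ states. Comparing $O(p(n))$ with $2^n$ gives a contradiction for large $n$, so $r\notin\mathbf{FP}[2_{\mathit{fin}}^{\Sigma^*}]$, and the strict inclusion follows. I expect the main obstacle to be exactly this lower bound: one must argue that the correlation between the two halves of $u\#u$ cannot be expressed succinctly, and the cleanest route is to pass through the well-known exponential NFA-size lower bound for languages of the form $\{u\#u\}$ rather than attempting a direct term-size estimate, since a naive cardinality count is useless here (both $L_n$ and $\{a,b\}^n$ have $2^n$ words, yet the latter admits a linear-size term).
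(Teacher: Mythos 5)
Your proof is correct, and the overall strategy (a ``copy''-type series that is easy for a nondeterministic weighted machine but needs exponentially long terms) is the same as the paper's; however, your route to the crucial term-size lower bound is genuinely different. The paper works with the palindrome series $(\mathsf{PAL},c^n)=\{w\#w^R \mid w\in\{a,b\}^n\}$ and argues directly on the term: it relabels the occurrences of $\#$ in a term $t(n)$ uniquely as $\#_1,\ldots,\#_m$, observes that the words of the relabelled language containing the (unique) occurrence $\#_i$ form a product of possible prefixes and suffixes around that occurrence, so that two distinct palindromes sharing $\#_i$ would force a non-palindrome $u\#_i v^R$ into the language; hence each occurrence of $\#$ accounts for at most one of the $2^n$ words, giving $\lvert t(n)\rvert\geq 2^n$ with no machinery beyond the term itself. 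You instead observe that a term over $2_{\mathit{fin}}^{\Sigma^*}$ is a star-free regular expression, convert it via Thompson's construction into an NFA of linearly related size, and invoke the fooling-set bound with $\{(u\#,u)\mid u\in\{a,b\}^n\}$ to force $2^n$ states. The two arguments share the same combinatorial core: your common state $q_u$ reached after reading $u\#$ plays exactly the role of the paper's shared occurrence $\#_i$, and both derive the contradiction from the cross word $u\#v$ (respectively $u\#_i v^R$). What your version buys is modularity -- the hard work is delegated to two standard citable facts (Thompson's construction and the fooling-set technique), and it isolates a reusable principle, namely that $\mathbf{FP}$ lower bounds over language semirings follow from NFA size lower bounds; what the paper's version buys is self-containedness, needing no automata theory at all. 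Two minor remarks: your normalisation of the generators in $G$ to singletons (constant blow-up) is the same step the paper performs by invoking Lemma~\ref{le:generators} to reduce to $\mathbf{FP}_{\Sigma}$, which works because $\langle\Sigma\rangle=2_{\mathit{fin}}^{\Sigma^*}$; and for membership in $\mathbf{NP}[2_{\mathit{fin}}^{\Sigma^*}]$ you could also build the weighted machine directly (guess $u$ with transitions weighted $\{a\},\{b\}$, emit $\{\#\}$, then re-read $u$ from the tape), rather than going through Example~\ref{ex:transducers}.
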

\begin{proof}
Let $\mathsf{PAL}$ be a power series in \smash{$2_{\mathit{fin}}^{\Sigma^*}\llangle c^*\rrangle$} defined for all $n \in \mathbb{N}$ by 
\begin{displaymath}
(\mathsf{PAL},c^n) = \{w\#w^R \mid w \in \{a,b\}^n\}.
\end{displaymath}
It is clear that \smash{$\mathsf{PAL} \in \mathbf{NP}[2_{\mathit{fin}}^{\Sigma^*}]$}.

We now prove that $\mathsf{PAL}$ is not in \smash{$\mathbf{FP}_{\Sigma}[2_{\mathit{fin}}^{\Sigma^*}]$} (each $c \in \Sigma$ is identified with $\{c\}$ here), so
that it is not in~\smash{$\mathbf{FP}[2_{\mathit{fin}}^{\Sigma^*}]$} by~Lemma~\ref{le:generators}.
To do so, we show that $\lvert t(n)\rvert \geq 2^n$ for every $t(n) \in T(\Sigma)$ satisfying \smash{$h_{\Sigma}[2_{\mathit{fin}}^{\Sigma^*}](t(n)) = (\mathsf{PAL},c^n)$} and~for~all~$n \in \mathbb{N}$.
 
Let a term $t(n)$ be given and let us form a term $t'(n)$ by~labelling occurrences of~$\#$ in~$t(n)$ uniquely by~$\#_1,\ldots,\#_m$ for some $m \in \mathbb{N}$; fix $\Gamma := \{a,b,\#_1,\ldots,\#_m\}$ and $L := h_{\Gamma}[2_{\mathit{fin}}^{\Gamma^*}](t'(n))$.
Then it is clear that $h(L) = (\mathsf{PAL},c^n)$ for~a~homomorphism $h\colon \Gamma^* \to \Sigma^*$ given by $h(a) = a$, $h(b) = b$, and $h(\#_i) = \#$ for~$i = 1,\ldots,m$.
However, it is also clear that if $u\#_i u^R$ and $v\#_i v^R$ are in $L$ for some $u \neq v$ from $\{a,b\}^n$ and~some $i \in \{1,\ldots,m\}$, then a non-palindrome $u\#_i v^R$ is in $L$ as well.
Hence, each $\#_i$ can correspond to~at~most one $w \in L$, implying $\lvert t(n)\rvert \geq m \geq 2^n$.      
\end{proof}

It would be interesting to know about some other examples of semirings $S$, preferably smaller than $2_{\mathit{fin}}^{\Sigma^*}$ in~the~ordering by~factorisation,
such that provably $\mathbf{FP}[S] \neq \mathbf{NP}[S]$. 

\section*{Acknowledgements}

I would like to thank the anonymous reviewers of all versions of this article for~their helpful suggestions.
Moreover, my special thanks go to Manfred Droste for prompting me to~finally publish this long-forgotten piece of work. 


\bibliographystyle{abbrv}
\bibliography{references}

\begin{thebibliography}{10}

\bibitem{almeida1994a}
J.~Almeida.
\newblock {\em Finite Semigroups and Universal Algebra}.
\newblock World Scientific, 1994.

\bibitem{arenas2017a}
M.~Arenas, M.~Mu{\~{n}}oz, and C.~Riveros.
\newblock Descriptive complexity for counting complexity classes.
\newblock In {\em Logic in Computer Science, LICS 2017}, pages 1--12, 2017.

\bibitem{arenas2020a}
M.~Arenas, M.~Mu{\~{n}}oz, and C.~Riveros.
\newblock Descriptive complexity for counting complexity classes.
\newblock {\em Logical Methods in Computer Science}, 16(1):9:1--9:42, 2020.

\bibitem{arora2009a}
S.~Arora and B.~Barak.
\newblock {\em Computational Complexity}.
\newblock Cambridge University Press, 2009.

\bibitem{badia2024a}
G.~Badia, M.~Droste, C.~Noguera, and E.~Paul.
\newblock Logical characterizations of weighted complexity classes.
\newblock Available at \texttt{https://arxiv.org/abs/2404.17784}, 2024.

\bibitem{beaudry2007a}
M.~Beaudry and M.~Holzer.
\newblock The complexity of tensor circuit evaluation.
\newblock {\em Computational Complexity}, 16(1):60--111, 2007.

\bibitem{bedregal2008a}
B.~C. Bedregal and S.~Figueira.
\newblock On the computing power of fuzzy {T}uring machines.
\newblock {\em Fuzzy Sets and Systems}, 159(9):1072--1083, 2008.

\bibitem{beigel1992a}
R.~Beigel and J.~Gill.
\newblock Counting classes: Thresholds, parity, mods, and fewness.
\newblock {\em Theoretical Computer Science}, 103(1):3--23, 1992.

\bibitem{bergman2011a}
C.~Bergman.
\newblock {\em Universal Algebra: Fundamentals and Selected Topics}.
\newblock Chapman and Hall/CRC, 2011.

\bibitem{berstel2011a}
J.~Berstel and C.~Reutenauer.
\newblock {\em Noncommutative Rational Series with Applications}.
\newblock Cambridge University Press, 2011.

\bibitem{book1985a}
R.~V. Book, T.~J. Long, and A.~L. Selman.
\newblock Qualitative relativizations of complexity classes.
\newblock {\em Journal of Computer and System Sciences}, 30(3):395--413, 1985.

\bibitem{buchi1960a}
J.~R. B{\"u}chi.
\newblock Weak second-order arithmetic and finite automata.
\newblock {\em Zeitschrift f{\"u}r mathematische {L}ogik und {G}rundlagen der
  {M}athematik}, 6:66--92, 1960.

\bibitem{burgin2008a}
M.~Burgin and E.~Eberbach.
\newblock Cooperative combinatorial optimization: Evolutionary computation case
  study.
\newblock {\em BioSystems}, 91(1):34--50, 2008.

\bibitem{cai1990a}
J.~Cai and L.~A. Hemachandra.
\newblock On the power of parity polynomial time.
\newblock {\em Mathematical Systems Theory}, 23(2):95--106, 1990.

\bibitem{ciric2010a}
M.~{\'C}iri{\'c}, M.~Droste, J.~Ignjatovi{\'c}, and H.~Vogler.
\newblock Determinization of weighted finite automata over strong bimonoids.
\newblock {\em Information Sciences}, 180:3497--3520, 2010.

\bibitem{crescenzi1999a}
P.~Crescenzi, V.~Kann, R.~Silvestri, and L.~Trevisan.
\newblock Structure in approximation classes.
\newblock {\em {SIAM} Journal on Computing}, 28(5):1759--1782, 1999.

\bibitem{damm2002a}
C.~Damm, M.~Holzer, and P.~McKenzie.
\newblock The complexity of tensor calculus.
\newblock {\em Computational Complexity}, 11(1--2):54--89, 2002.

\bibitem{droste2015a}
M.~Droste and S.~D{\"u}ck.
\newblock Weighted automata and logics on graphs.
\newblock In {\em Mathematical Foundations of Computer Science, MFCS 2015, Part
  I}, pages 192--204, 2015.

\bibitem{droste2007a}
M.~Droste and P.~Gastin.
\newblock Weighted automata and weighted logics.
\newblock {\em Theoretical Computer Science}, 380(1--2):69--86, 2007.

\bibitem{droste2009c}
M.~Droste and P.~Gastin.
\newblock Weighted automata and weighted logics.
\newblock In M.~Droste, W.~Kuich, and H.~Vogler, editors, {\em Handbook of
  Weighted Automata}, chapter~5, pages 175--211. Springer, 2009.

\bibitem{droste2019a}
M.~Droste and P.~Gastin.
\newblock Aperiodic weighted automata and weighted first-order logic.
\newblock In {\em Mathematical Foundations of Computer Science, MFCS 2019},
  2019.
\newblock Article 76.

\bibitem{droste2009b}
M.~Droste and W.~Kuich.
\newblock Semirings and formal power series.
\newblock In M.~Droste, W.~Kuich, and H.~Vogler, editors, {\em Handbook of
  Weighted Automata}, chapter~1, pages 3--28. Springer, 2009.

\bibitem{droste2009a}
M.~Droste, W.~Kuich, and H.~Vogler, editors.
\newblock {\em Handbook of Weighted Automata}.
\newblock Springer, 2009.

\bibitem{droste2021a}
M.~Droste and D.~Kuske.
\newblock Weighted automata.
\newblock In J.-{\'E}. Pin, editor, {\em Handbook of Automata Theory, Vol. 1},
  chapter~4, pages 113--150. European Mathematical Society, 2021.

\bibitem{droste2007b}
M.~Droste and G.~Rahonis.
\newblock Weighted automata and weighted logics with discounting.
\newblock In {\em Implementation and Application of Automata, CIAA 2007}, pages
  73--84, 2007.

\bibitem{droste2010a}
M.~Droste, T.~St{\"u}ber, and H.~Vogler.
\newblock Weighted finite automata over strong bimonoids.
\newblock {\em Information Sciences}, 180:156--166, 2010.

\bibitem{droste2006a}
M.~Droste and H.~Vogler.
\newblock Weighted tree automata and weighted logics.
\newblock {\em Theoretical Computer Science}, 366(3):228--247, 2006.

\bibitem{droste2011b}
M.~Droste and H.~Vogler.
\newblock Weighted logics for unranked tree automata.
\newblock {\em Theory of Computing Systems}, 48(1):23--47, 2011.

\bibitem{droste2014a}
M.~Droste and H.~Vogler.
\newblock The {C}homsky-{S}ch{\"u}tzenberger theorem for quantitative
  context-free languages.
\newblock {\em International Journal of Foundations of Computer Science},
  25(8):955--969, 2014.

\bibitem{durand2021a}
A.~Durand, A.~Haak, J.~Kontinen, and H.~Vollmer.
\newblock Descriptive complexity of {$\#P$} functions: A new perspective.
\newblock {\em Journal of Computer and System Sciences}, 116:40--54, 2021.

\bibitem{eilenberg1974a}
S.~Eilenberg.
\newblock {\em Automata, Languages, and Machines, Vol. A}.
\newblock Academic Press, 1974.

\bibitem{eiter2021a}
T.~Eiter and R.~Kiesel.
\newblock On the complexity of sum-of-products problems over semirings.
\newblock In {\em AAAI Conference on Artificial Intelligence, AAAI-21}, pages
  6304--6311, 2021.

\bibitem{eiter2023a}
T.~Eiter and R.~Kiesel.
\newblock Semiring reasoning frameworks in {AI} and their computational
  complexity.
\newblock {\em Journal of Artificial Intelligence Research}, 77:207--293, 2023.

\bibitem{fenner1994a}
S.~A. Fenner, L.~J. Fortnow, and S.~A. Kurtz.
\newblock Gap-definable counting classes.
\newblock {\em Journal of Computer and System Sciences}, 48(1):116--148, 1994.

\bibitem{golan1999a}
J.~S. Golan.
\newblock {\em Semirings and their Applications}.
\newblock Kluwer Academic Publishers, 1999.

\bibitem{goldreich2008a}
O.~Goldreich.
\newblock {\em Computational Complexity}.
\newblock Cambridge University Press, 2008.

\bibitem{gradel2007a}
E.~Gr{\"a}del.
\newblock Finite model theory and descriptive complexity.
\newblock In {\em Finite Model Theory and Its Applications}, chapter~3, pages
  125--230. Springer, 2007.

\bibitem{gupta1991a}
S.~Gupta.
\newblock The power of witness reduction.
\newblock In {\em Structure in Complexity Theory, SCT 1991}, pages 43--59,
  1991.

\bibitem{gupta1995a}
S.~Gupta.
\newblock Closure properties and witness reduction.
\newblock {\em Journal of Computer and System Sciences}, 50(3):412--432, 1995.

\bibitem{hebisch1998a}
U.~Hebisch and H.~J. Weinert.
\newblock {\em Semirings}.
\newblock World Scientific, 1998.

\bibitem{herrmann2019a}
L.~Herrmann, H.~Vogler, and M.~Droste.
\newblock Weighted automata with storage.
\newblock {\em Information and Computation}, 269, 2019.
\newblock Article 104447.

\bibitem{honkala2009a}
J.~Honkala.
\newblock Lindenmayer systems.
\newblock In M.~Droste, W.~Kuich, and H.~Vogler, editors, {\em Handbook of
  Weighted Automata}, chapter~8, pages 291--311. Springer, 2009.

\bibitem{hopcroft1979a}
J.~E. Hopcroft and J.~D. Ullman.
\newblock {\em Introduction to Automata Theory, Languages, and Computation}.
\newblock Addison-Wesley, 1979.

\bibitem{immerman1999a}
N.~Immerman.
\newblock {\em Descriptive Complexity}.
\newblock Springer, 1999.

\bibitem{inoue2023a}
Y.~Inoue, K.~Hashimoto, and H.~Seki.
\newblock An ambiguity hierarchy of weighted context-free grammars.
\newblock {\em Theoretical Computer Science}, 974, 2023.
\newblock Article 114112.

\bibitem{kann1992a}
V.~Kann.
\newblock On the approximability of the maximum common subgraph problem.
\newblock In {\em Symposium on Theoretical Aspects of Computer Science, STACS
  1992}, pages 375--388, 1992.

\bibitem{kann1995a}
V.~Kann.
\newblock Strong lower bounds on the approximability of some {NPO PB}-complete
  maximization problems.
\newblock In {\em Mathematical Foundations of Computer Science, MFCS 1995},
  pages 227--236, 1995.

\bibitem{klement2000a}
E.~P. Klement, R.~Mesiar, and E.~Pap.
\newblock {\em Triangular Norms}.
\newblock Springer, 2000.

\bibitem{krentel1988a}
M.~W. Krentel.
\newblock The complexity of optimization problems.
\newblock {\em Journal of Computer and System Sciences}, 36(3):490--509, 1988.

\bibitem{kuich1997a}
W.~Kuich.
\newblock Semirings and formal power series: Their relevance to formal
  languages and automata.
\newblock In G.~Rozenberg and A.~Salomaa, editors, {\em Handbook of Formal
  Languages, Vol. 1}, chapter~9, pages 609--677. Springer, 1997.

\bibitem{kuich1986a}
W.~Kuich and A.~Salomaa.
\newblock {\em Semirings, Automata, Languages}.
\newblock Springer, 1986.

\bibitem{kuich1983a}
W.~Kuich and F.~J. Urbanek.
\newblock Infinite linear systems and one counter languages.
\newblock {\em Theoretical Computer Science}, 22(1--2):95--126, 1983.

\bibitem{li2015}
P.~Li, Y.~Li, and S.~Geng.
\newblock Weighted {T}uring machines over strong bimonoids.
\newblock In {\em Fuzzy Systems and Knowledge Discovery, FSKD 2015}, pages
  208--212, 2015.

\bibitem{maletti2017a}
A.~Maletti.
\newblock Survey: Finite-state technology in natural language processing.
\newblock {\em Theoretical Computer Science}, 679:2--17, 2017.

\bibitem{papadimitriou1994a}
C.~H. Papadimitriou.
\newblock {\em Computational Complexity}.
\newblock Addison-Wesley, 1994.

\bibitem{papadimitriou1983a}
C.~H. Papadimitriou and S.~K. Zachos.
\newblock Two remarks on the power of counting.
\newblock In {\em 6th {GI} Conference in Theoretical Computer Science}, pages
  269--275, 1983.

\bibitem{petre2009a}
I.~Petre and A.~Salomaa.
\newblock Algebraic systems and pushdown automata.
\newblock In M.~Droste, W.~Kuich, and H.~Vogler, editors, {\em Handbook of
  Weighted Automata}, chapter~7, pages 257--289. Springer, 2009.

\bibitem{sakarovitch2009a}
J.~Sakarovitch.
\newblock {\em Elements of Automata Theory}.
\newblock Cambridge University Press, 2009.

\bibitem{sakarovitch2009b}
J.~Sakarovitch.
\newblock Rational and recognisable power series.
\newblock In M.~Droste, W.~Kuich, and H.~Vogler, editors, {\em Handbook of
  Weighted Automata}, chapter~4, pages 105--174. Springer, 2009.

\bibitem{sakarovitch2021a}
J.~Sakarovitch.
\newblock Automata and rational expressions.
\newblock In J.-{\'E}. Pin, editor, {\em Handbook of Automata Theory, Vol. 1},
  chapter~2, pages 39--78. European Mathematical Society, 2021.

\bibitem{salomaa1978a}
A.~Salomaa and M.~Soittola.
\newblock {\em Automata-Theoretic Aspects of Formal Power Series}.
\newblock Springer, 1978.

\bibitem{saluja1995a}
S.~Saluja, K.~V. Subrahmanyam, and M.~N. Thakur.
\newblock Descriptive complexity of {$\#P$} functions.
\newblock {\em Journal of Computer and System Sciences}, 50(3):493--505, 1995.

\bibitem{schutzenberger1961a}
M.-P. Sch{\"u}tzenberger.
\newblock On the definition of a family of automata.
\newblock {\em Information and Control}, 4(2--3):245--270, 1961.

\bibitem{sipser2013a}
M.~Sipser.
\newblock {\em Introduction to the Theory of Computation}.
\newblock Cengage Learning, 3rd edition, 2013.

\bibitem{stanat1972a}
D.~F. Stanat.
\newblock A homomorphism theorem for weighted context-free grammars.
\newblock {\em Journal of Computer and System Sciences}, 6(3):217--232, 1972.

\bibitem{valiant1979a}
L.~G. Valiant.
\newblock The complexity of computing the permanent.
\newblock {\em Theoretical Computer Science}, 8(2):189--201, 1979.

\bibitem{valiant1979b}
L.~G. Valiant.
\newblock The complexity of enumeration and reliability problems.
\newblock {\em {SIAM} Journal on Computing}, 8(3):410--421, 1979.

\bibitem{wiedermann2004a}
J.~Wiedermann.
\newblock Characterizing the super-{T}uring computing power and efficiency of
  classical fuzzy {T}uring machines.
\newblock {\em Theoretical Computer Science}, 317(1--3):61--69, 2004.

\bibitem{zadeh1968a}
L.~A. Zadeh.
\newblock Fuzzy algorithms.
\newblock {\em Information and Control}, 12(2):94--102, 1968.

\end{thebibliography}

\end{document}